\pdfoutput=1
\documentclass[12pt]{article}

\usepackage[british]{babel}
\usepackage[T1]{fontenc}
\usepackage[utf8]{inputenc}
\usepackage{lmodern}
\usepackage[a4paper,margin=2.5cm]{geometry}
\usepackage{setspace}
\usepackage[mathlines]{lineno}
\usepackage{authblk}
\usepackage{amsmath,amssymb,amsthm}
\usepackage{bm}
\usepackage{booktabs}
\usepackage{capt-of}
\usepackage{graphicx}
\usepackage{pdfpages}
\usepackage{xcolor}
\usepackage{tcolorbox}
\usepackage{tikz}
\usetikzlibrary{positioning,arrows.meta,backgrounds,calc}
\usepackage[natbibapa]{apacite}
\AtBeginDocument{}
\AtBeginDocument{}

\AtBeginDocument{\renewcommand{\doi}[1]{\url{https://doi.org/#1}}}
\usepackage[hidelinks]{hyperref}

\graphicspath{{figures/}}
\newtheorem{proposition}{Proposition}

\newcommand{\BookDocs}{2,063}
\newcommand{\BookAnalysed}{2,004}
\newcommand{\BookRhoRandom}{\ensuremath{0.17}}

\newcommand{\BookPairsBoth}{\ensuremath{7.4}}
\newcommand{\BookRoots}{16}
\newcommand{\BookThemes}{28}
\newcommand{\BookLinks}{36}

\newcommand{\SimRuns}{54}
\newcommand{\SimMinN}{279}
\newcommand{\SimMaxN}{1,800}
\newcommand{\SimARIRootsMin}{\ensuremath{0.97}}
\newcommand{\SimARIThemesHigh}{\ensuremath{0.98}}
\newcommand{\SimARIThemesLowBal}{\ensuremath{0.65}}
\newcommand{\SimARIThemesLowUnb}{\ensuremath{0.54}}
\newcommand{\SimTFiveUnbHigh}{78\%}
\newcommand{\SimTSevenMin}{67\%}
\newcommand{\SimFalseBranchRoots}{4}
\newcommand{\SimFalseBranchOf}{18}
\newcommand{\SimFalseBranchUnb}{7\%}
\newcommand{\SimPrecBranchMin}{\ensuremath{0.35}}
\newcommand{\SimPrecBranchMax}{\ensuremath{0.45}}
\newcommand{\SimPrecConvMin}{\ensuremath{0.10}}
\newcommand{\SimPrecConvMax}{\ensuremath{0.32}}
\newcommand{\SimWeakNMin}{5}
\newcommand{\SimWeakNMax}{11}
\newcommand{\SimWeakResMin}{\ensuremath{2.9}}
\newcommand{\SimWeakResMax}{\ensuremath{6.9}}
\newcommand{\SimFalseARIThemesMin}{\ensuremath{0.47}}
\newcommand{\SimFalseARIThemesMax}{\ensuremath{0.56}}
\newcommand{\RuleAndFalseUnb}{4\%}
\newcommand{\RuleAndLinks}{19}
\newcommand{\RuleAndBranching}{5}
\newcommand{\RuleAndConvergent}{5}
\newcommand{\RuleAdoptedLinks}{36}
\newcommand{\RuleAdoptedBranching}{9}
\newcommand{\RuleAdoptedConvergent}{12}
\newcommand{\ValRuns}{72}
\newcommand{\ValBranching}{254}
\newcommand{\ValFalse}{20}
\newcommand{\ValFalseShareAll}{8\%}
\newcommand{\ValPersistent}{211}
\newcommand{\ValFalsePersistent}{5}
\newcommand{\ValFalseSharePersistent}{2.4\%}
\newcommand{\ValFalseShareNotPersistent}{35\%}
\newcommand{\ValFalseRemoved}{75\%}
\newcommand{\ValTrueLost}{12\%}

\newcommand{\DevPersistent}{173}

\newcommand{\BookTopicCoverage}{98\%}
\newcommand{\BookRootOneSize}{434}
\newcommand{\BookBranching}{9}
\newcommand{\BookConvergent}{12}
\newcommand{\BookBranchingPersist}{4}
\newcommand{\BookConvergentPersist}{10}
\newcommand{\BookLinksPersist}{24}

\newcommand{\BookAgreeRoots}{\ensuremath{0.74}}
\newcommand{\BookAgreeThemes}{\ensuremath{0.45}}
\newcommand{\BookPairs}{120}
\newcommand{\BookPairsBothAreas}{26}
\newcommand{\BookPairsTopOnly}{20}
\newcommand{\BookPairsRefOnly}{9}
\newcommand{\BookPairsNeither}{65}
\newcommand{\EvoPairs}{78}
\newcommand{\EvoPairsThree}{15}
\newcommand{\EvoPairsMove}{34}
\newcommand{\EvoConverging}{19}
\newcommand{\EvoConsolidating}{16}
\newcommand{\EvoDiverging}{4}
\newcommand{\EvoDriftingApart}{24}
\newcommand{\EvoCloserinReferences}{2}
\newcommand{\EvoFartherinReferences}{9}
\newcommand{\EvoStable}{4}
\newcommand{\XaiDocs}{3,628}

\newcommand{\XaiRho}{\ensuremath{0.04}}

\newcommand{\DisDocs}{2,272}
\newcommand{\DisAnalysed}{1,521}

\newcommand{\StabConsRootsMin}{\ensuremath{0.82}}
\newcommand{\StabConsRootsMax}{\ensuremath{0.94}}
\newcommand{\StabConsThemesMin}{\ensuremath{0.76}}
\newcommand{\StabConsThemesMax}{\ensuremath{0.80}}
\newcommand{\StabSingleRootsMin}{\ensuremath{0.47}}
\newcommand{\StabSingleRootsMax}{\ensuremath{0.76}}
\newcommand{\StabSingleThemesMin}{\ensuremath{0.43}}
\newcommand{\StabSingleThemesMax}{\ensuremath{0.44}}
\newcommand{\SensKFiveRoots}{\ensuremath{0.54}}
\newcommand{\SensKTwentyRoots}{\ensuremath{0.74}}
\newcommand{\SensKFiveThemes}{\ensuremath{0.28}}
\newcommand{\SensKTwentyThemes}{\ensuremath{0.39}}
\newcommand{\SensAssocThemes}{9}
\newcommand{\SensAssocNMI}{\ensuremath{0.14}}
\newcommand{\SensTopRoots}{\ensuremath{0.44}}
\newcommand{\SensIdfRoots}{\ensuremath{0.35}}
\newcommand{\SensTiabThemes}{\ensuremath{0.12}}
\newcommand{\SensNoCleanThemes}{\ensuremath{0.28}}

\newcommand{\BookProfiledRoots}{7}

\title{Multiplex coupling: Connecting the intellectual base and the topics of a research field}

\author[1,2]{Massimo Aria}
\author[1,2,3]{Corrado Cuccurullo}
\author[1,2]{Maria Spano}
\affil[1]{Department of Economics and Statistics, University of Naples Federico II, Naples, Italy}
\affil[2]{K-Synth academic Spinoff, University of Naples Federico II, Naples, Italy}
\affil[3]{Department of Economics and Management, University of Campania Luigi Vanvitelli, Capua, Italy}
\date{}

\begin{document}

\maketitle
\begin{center}
\small Corresponding author: Massimo Aria, \href{mailto:aria@unina.it}{aria@unina.it}\\[2pt]
ORCID: Massimo Aria 0000-0002-8517-9411; Corrado Cuccurullo 0000-0002-7401-8575;\\
Maria Spano 0000-0002-3103-2342
\end{center}

\begin{abstract}
\noindent
Two documents can be close because they cite the same works or because they write
about the same things, and the two kinds of proximity often disagree. Science
mapping reads them in separate maps or merges them, as hybrid clustering does;
either way the disagreement is lost. We introduce multiplex coupling, a
framework that couples the same documents by their references and by their
terms, keeps the two layers separate and analyses their relation. Similarities are standardised against a null model, and the agreement
between the layers is measured on random pairs of documents. Consensus clustering
partitions each layer into roots, groups sharing an intellectual base, and
themes, groups sharing a topic. A test on the roots $\times$ themes table
identifies branching roots and convergent themes, and pairs of roots are followed
over time. Every structure is reported with its persistence across numbers of
neighbours, which on synthetic data identifies most false branching.
Applications to Web of Science, Scopus and OpenAlex collections show that the
two layers carry largely complementary information, and reveal substantive and
methodological bases that have grown into several topics. The framework is
implemented in \textit{bibliometrix}.
\end{abstract}

\noindent\textbf{Keywords:} bibliographic coupling, science mapping, multiplex networks,
research fronts, thematic analysis, bibliometrix

\clearpage
\doublespacing

\section{Introduction}\label{sec:intro}

Two scientific documents can be close in two ways: they can cite the same works,
or they can write about the same things. Science mapping has a method for each.
Bibliographic coupling \citep{kessler1963} relates documents through their shared
references and groups them into research fronts defined by a common intellectual
base \citep{persson1994,boyack2010}; co-word analysis \citep{callon1983} and its
variants relate them through their shared terms and describe the conceptual
structure of a field. In the workflow of \citet{aria2017}, the two methods map
different knowledge structures of the same collection, and the analyst reads one
map at a time.

The two kinds of proximity need not coincide, and often they do not. Documents
that draw on the same base may apply it to unrelated subjects, as when a
methodological literature is cited across a whole discipline; documents that
cite different bases may meet on the same subject. Comparisons of citation-based
and text-based maps of the same documents find that the two partitions overlap
little \citep{boyack2020,xie2025}. A map built on one kind of proximity cannot
show where it disagrees with the other.

The established answer to this disagreement is hybrid clustering, which combines
citation links and lexical similarity into a single measure and clusters the
documents once \citep{janssens2008,liu2010,thijs2013,thijs2018}. Hybrid
measures have been found to give more accurate partitions than either signal
alone \citep{liu2010,boyack2020}, which is what matters for delineating topics.
For understanding how a field is structured, however, merging the two signals
discards the information carried by their disagreement. A hybrid partition can
no longer tell whether a group of documents shares references but not topics,
or topics but not references.

Multiplex coupling combines the two kinds of layer without merging them.
Comparing two layers on the same documents raises methodological problems that
single-layer mapping does not face. Similarities in the two layers are on
different scales and have to be standardised before they can be compared.
Clusters found by modularity optimisation
change with the random order of the algorithm and with the density of the graph,
so a relation found between the clusters of the two layers may be an artefact of
an unstable partition.
We address, in particular, three research questions:

\begin{description}
  \item[RQ1 --] How far do the intellectual base and the topics of a field agree on
  the same documents, and how can their agreement be measured without the bias
  introduced by the selection of neighbours?
  \item[RQ2 --] Which intellectual roots have grown into several themes, which
  themes have been reached from different roots, and how can these structures be
  distinguished from artefacts of the clustering?
  \item[RQ3 --] How do the relations between roots, in references and in topics,
  change over time?
\end{description}

We answer these questions with a framework that treats the two kinds of
proximity as two layers of a multiplex network on the same documents
\citep{kivela2014,battiston2014}. Each layer is standardised against a null
model, so that similarities in references and in terms become comparable, and
the agreement between the layers is measured on random pairs of documents and
tested by permutation. Consensus clustering then partitions the references layer
into \emph{roots}, groups of documents that share an intellectual base, and the
topic layer into \emph{themes}, groups that share a topic. A single rule on
their cross-tabulation tells which roots have grown into several themes and
which themes have been reached from several roots, and pairs of roots are
placed on the plane of their proximity in references and in topics and followed
over time. Robustness is part of the result rather than a check made
afterwards: every link and structure is reported with its persistence across
numbers of neighbours, which on synthetic data turns out to be the diagnostic of
false branching. We apply the framework to a collection of management studies
from Web of Science, report two further collections from Scopus and OpenAlex in
the Supplementary Material, assess it on synthetic collections with a planted
structure, and make it available in \textit{bibliometrix} and Biblioshiny.

Section~\ref{sec:background} reviews the background and
Section~\ref{sec:method} presents the framework. Section~\ref{sec:application}
applies it to the management collection and Section~\ref{sec:validation}
assesses it on synthetic data. Section~\ref{sec:discussion} discusses results
and limitations, and Section~\ref{sec:conclusions} concludes.

\section{Background}\label{sec:background}

\subsection{Citation-based and word-based science mapping}\label{sec:bg-mapping}\label{sec:bg-hybrid}

Science mapping represents the relational structure of a research field from
bibliographic data \citep{borner2003,cobo2011,zupic2015}. Two of its oldest
methods use the references of documents. In bibliographic coupling
\citep{kessler1963} two documents are related when they cite the same works; in
co-citation analysis \citep{small1973} two works are related when they are cited
together. \citet{persson1994} gave the two methods distinct objects: the
co-cited works form the \emph{intellectual base} of a field, and the citing
documents grouped by coupling form its \emph{research fronts}. Among the
citation approaches, bibliographic coupling has been found to represent the
research front of a set of documents more accurately than co-citation
\citep{boyack2010}, while direct citation gives the most accurate taxonomies of
large literatures \citep{klavans2017}.

Co-word analysis \citep{callon1983,callon1991} relates terms that occur in the
same documents, and the clusters of the resulting network are read as the themes
of a field, its conceptual structure. With a Document $\times$ Attribute matrix
$\mathbf{A}$, coupling is $\mathbf{A}\mathbf{A}^{\top}$, whereas co-citation and
co-word analysis are $\mathbf{A}^{\top}\mathbf{A}$ \citep{aria2017}. Coupling can be applied to attributes other than
references: author keyword coupling relates authors through the keywords they
use \citep{yang2016}, and in the same way documents can be coupled through their
terms. The two layers of the present framework are of this kind, two couplings of
the same documents, one by references and one by terms. Science mapping
software computes these structures separately \citep{cobo2011,aria2017}, and
their comparison is left to the analyst.

The relation between citation-based and word-based structures has been studied
since \citet{braam1991}, who characterised co-citation clusters by the words of the
documents citing them. They found that co-citation analysis does identify
subject-matter specialties, but that different clusters may identify the same
specialty. Later work moved from comparing the two structures to combining them. Hybrid
clustering combines citation links and lexical similarity into a single measure,
either by a weighted combination of the two similarity matrices or by integrating
the partitions obtained from each \citep{janssens2008,liu2010}. The approach has
been refined with second-order similarities \citep{thijs2013}, used to detect
emerging topics \citep{glanzel2012}, and assessed for the contribution of its
lexical component \citep{thijs2018}. Citation-based, text-based and combined
relatedness measures have been compared experimentally \citep{ahlgren2009}, and
then on large collections against principled accuracy criteria
\citep{waltman2020,ahlgren2020}. On nine million biomedical documents,
\citet{boyack2020} found that a hybrid of direct citation and textual relatedness
gives the most accurate clusters, and that the clusters of the citation-based and
text-based models differ in content more than in accuracy: the two logics are
complementary.

A parallel line of work compares the two kinds of structure without merging them.
At the level of institutions, \citet{yan2012} found that coupling, citation and
co-citation networks resemble each other, as do topical and co-word networks, and
that the main dimension separating the networks is citation-based versus
non-citation-based; they recommended hybrid or heterogeneous networks for the
study of scholarly communication. \citet{bascur2025} found that the same topic
categories are well or poorly represented in maps built from citation and from
text similarity, with citation networks performing better for some categories and
for small clusters. \citet{xie2025} compared citation-based clustering with topic
modelling on cardiovascular research and found the relations between clusters and
topics generally weak: citation clusters reflect the intellectual structure and
its micro-communities, while topics better capture the societal needs related to the disease.

These comparisons establish that citation-based and word-based partitions of the
same documents differ, and differ systematically. They measure the difference
for the map as a whole or for categories of topics. They do not identify which
clusters of one partition are related to which clusters of the other beyond
chance, nor which citation clusters are spread over several topics.

\subsection{Multiplex and multilayer networks}\label{sec:bg-multiplex}

A multiplex network is a set of nodes connected by several types of edges, one
layer for each type; it is the special case of a multilayer network in which every
layer contains the same nodes \citep{kivela2014}. Several measures describe how a
node or an edge is distributed across layers. The participation coefficient, used
in Section~\ref{sec:nodes}, measures how evenly the edges of a node are spread over
the layers, and edge overlap measures how many pairs of nodes are connected in more
than one layer \citep{battiston2014}. At the level of the whole network,
\citet{dedomenico2015} asked how many layers are needed to represent a system and
proposed to aggregate layers that carry redundant information.

Community detection in multiplex networks usually looks for communities across
layers. In the multislice modularity of \citet{mucha2010}, each node is connected to
its copies in the other layers, and a coupling parameter sets how strongly the
partition of one layer is tied to that of the others. This is the multilayer
counterpart of hybrid clustering: the layers inform a partition, and their
differences are a nuisance to be controlled.
\citet{aria2024comp} compared the conceptual structures of different units of
observation, such as institutions, in a joint representation; the comparison is
between groups of documents on the same layer, not between layers on the same
documents.

Existing approaches either merge the citation and the lexical signal into one
partition, as hybrid clustering and multislice community detection do, or compare
the two signals globally, as the comparisons of networks and maps do. What is
missing is a method that partitions the two layers of the same documents
separately, tests which clusters of one are related to which clusters of the
other, classifies the resulting structures, and reports how robust they are and how
they change over time. The framework of Section~\ref{sec:method} addresses this gap.

\section{The multiplex coupling framework}\label{sec:method}

Let $\mathcal{D}=\{d_1,\dots,d_N\}$ denote a collection of documents, each
described by the references it cites and by the terms it uses. The framework
represents $\mathcal{D}$ as a multiplex network with two layers on the same
nodes. In the references layer two documents are close when they cite the same
works; in the topic layer, when they use the same terms. Both layers are products
$\mathbf{A}\mathbf{A}^{\top}$ of a Document $\times$ Attribute matrix
\citep{aria2017,aria2026book}. The references layer is bibliographic coupling
itself \citep{kessler1963}; the topic layer couples documents through their
terms.

The analytical goal is threefold. First, to measure how far the two layers agree
on the same documents. Second, to partition each layer into clusters, the roots
of the references layer and the themes of the topic layer, and to identify which
roots feed which themes. Third, to compare the roots with each other and to
follow how their relations change over time. The three goals correspond to the
three phases of Fig.~\ref{fig:pipeline}; Table~\ref{tab:symbols} lists the
mathematical notation.

\begin{figure}[t]
  \centering
  {\linespread{1}\selectfont\resizebox{\textwidth}{!}{\begin{tikzpicture}[
  font=\small,
  content/.style={inner sep=6pt, text width=5.0cm, align=left, anchor=north},
  box/.style={draw=black!60, rounded corners=3pt, fill=black!4},
  head/.style={font=\small\bfseries, align=center, text width=5.6cm, anchor=north},
  arrow/.style={-{Stealth[length=2.5mm]}, thick, black!60}
]
\newcommand{\phaselist}{\begin{itemize}\setlength{\leftmargin}{1em}\setlength{\itemsep}{2pt}%
  \setlength{\parsep}{0pt}\setlength{\topsep}{0pt}\setlength{\partopsep}{0pt}\raggedright}
\node[head] (h1) at (0,0) {Phase I\\Layers and their association};
\node[head] (h2) at (6.4,0) {Phase II\\Roots, themes and links};
\node[head] (h3) at (12.8,0) {Phase III\\Pairs of roots over time};
\path let \p1 = (h1.south), \p2 = (h2.south), \p3 = (h3.south),
  \n1 = {min(\y1, \y2, \y3)} in coordinate (top) at (0, \n1 - 3mm);
\node[content] (p1) at (h1 |- top) {\phaselist
  \item weighted matrices $\mathbf{X}^{\mathrm{ref}}$, $\mathbf{X}^{\mathrm{top}}$ and cosine similarities
  \item $k$ nearest neighbours: pairs $E^{\mathrm{ref}}\cup E^{\mathrm{top}}$
  \item null model (curveball): pairs close in each layer
  \item association of the layers on random pairs (QAP)
  \item node indices
\end{itemize}};
\node[content] (p2) at (h2 |- top) {\phaselist
  \item consensus clustering of each graph: roots $\mathcal{R}$ and themes $\mathcal{T}$
  \item roots $\times$ themes table and residuals $r_{ab}$
  \item links: branching roots and convergent themes
  \item persistence across $k$
\end{itemize}};
\node[content] (p3) at (h3 |- top) {\phaselist
  \item lifts $\lambda^{\mathrm{ref}}$, $\lambda^{\mathrm{top}}$ of every pair of roots
  \item four areas of the plane
  \item lifts per period, slopes and trends
  \item pairs that change area
\end{itemize}};
\path let \p1 = (p1.south), \p2 = (p2.south), \p3 = (p3.south),
  \n1 = {min(\y1, \y2, \y3)} in coordinate (bottom) at (0, \n1);
\begin{scope}[on background layer]
  \foreach \p in {p1, p2, p3}
    \draw[box] (\p.north west) rectangle (\p.east |- bottom);
\end{scope}
\coordinate (mid) at ($(top)!0.5!(bottom)$);
\draw[arrow] (p1.east |- mid) -- (p2.west |- mid);
\draw[arrow] (p2.east |- mid) -- (p3.west |- mid);
\end{tikzpicture}}}
  \caption{The multiplex coupling framework. Phase I builds the references
  layer and the topic layer on the same documents, keeps the nearest neighbours
  of every document in each layer and standardises each pair against a null
  model. Phase II clusters each layer into roots and themes and links them
  through the roots $\times$ themes table. Phase III places every pair of roots
  on the plane of their proximity in references and in topics, and follows it
  over time.}
  \label{fig:pipeline}
\end{figure}

\begin{table}[t]
  \centering
  \caption{Mathematical notation.}
  \label{tab:symbols}
  \small
  \begin{tabular}{@{}ll@{}}
    \toprule
    Notation & Meaning \\
    \midrule
    $\mathcal{D}=\{d_1,\dots,d_N\}$ & documents analysed \\
    $\mathbf{A}^{\mathrm{ref}}$ ($N\times M$) & Document $\times$ Cited reference, binary \\
    $\mathbf{A}^{\mathrm{top}}$ ($N\times L$) & Document $\times$ Term \\
    $\mathbf{X}^{\mathrm{ref}},\ \mathbf{X}^{\mathrm{top}}$ & the same matrices, weighted \\
    $s^{\mathrm{ref}}_{ij},\ s^{\mathrm{top}}_{ij}$ & similarity of $d_i$ and $d_j$ in the two layers \\
    $E^{\mathrm{ref}},\ E^{\mathrm{top}}$ & pairs of nearest neighbours in the two layers \\
    $k$ & number of nearest neighbours \\
    $\mathcal{E}=E^{\mathrm{ref}}\cup E^{\mathrm{top}}$ & pairs analysed \\
    $B$, $\alpha$ & randomised matrices and level of the null model \\
    $\mathcal{R}=\{R_1,\dots,R_p\}$ & roots, clusters of the references layer \\
    $\mathcal{T}=\{T_1,\dots,T_q\}$ & themes, clusters of the topic layer \\
    $\mathbf{N}=[n_{ab}]$ & documents of root $R_a$ in theme $T_b$ \\
    $r_{ab}$ & adjusted standardised residual of $n_{ab}$ \\
    $z$, $n_{\min}$ & thresholds of the link rule on $r_{ab}$ and $n_{ab}$ \\
    $\kappa^{\mathrm{ref}}(C),\ \kappa^{\mathrm{top}}(C)$ & cohesion of a cluster $C$ \\
    $\lambda^{\mathrm{ref}}(R_a,R_b),\ \lambda^{\mathrm{top}}(R_a,R_b)$ & lift between two roots \\
    $\{t_1,\dots,t_P\}$, $\mathcal{D}^{(t)}$ & periods and their documents \\
    $m$ & threshold on the slopes of the trends \\
    \bottomrule
  \end{tabular}
\end{table}

\subsection{Documents and the two layers}\label{sec:layers}

Let $\mathbf{A}^{\mathrm{ref}}$ be a Document $\times$ Reference binary
matrix with $a^{\mathrm{ref}}_{im}=1$ when $d_i$ cites reference $m$. Only the
references cited by at least two documents are kept, since a reference cited
once links no pair.

The topic layer is built from keywords or from the terms of titles and
abstracts. Let $\mathbf{A}^{\mathrm{top}}$ be the corresponding Document $\times$
Term matrix, with $a^{\mathrm{top}}_{il}=\mathrm{tf}_{il}$, the number of
occurrences of term $l$ in $d_i$; for keywords, $a^{\mathrm{top}}_{il}$ is binary.
Keywords enter as binary indicators, terms of titles and abstracts with the weight
$1+\ln \mathrm{tf}_{il}$ when $\mathrm{tf}_{il}>0$. Every term is then weighted by its inverse
document frequency $\ln(N/\mathrm{df}_l)$: sharing a term that most documents
use counts for little, sharing a rare one for much more. References are not
weighted.

The similarity of two documents in a layer is the cosine of their weighted
rows \citep{salton1983},
\begin{equation}\label{eq:cosine}
  s_{ij} = \frac{\mathbf{x}_i^{\top}\mathbf{x}_j}{\lVert\mathbf{x}_i\rVert\,\lVert\mathbf{x}_j\rVert},
\end{equation}
computed separately in the two layers. Association strength and the Jaccard
index could be used as alternatives \citep{vaneck2009}.

\paragraph{Nearest neighbours}
A collection of $N$ documents has $N(N-1)/2$ pairs, and most of them are
unrelated in both layers. Each layer is therefore reduced to the $k$ nearest
neighbours of every document, the $k$ documents with the
highest positive similarity to $d_i$. The relation is made symmetric by union:
a pair belongs to $E^{\mathrm{ref}}$ when either document is among the $k$
nearest neighbours of the other in the references layer, and likewise for
$E^{\mathrm{top}}$. The pairs analysed are
$\mathcal{E}=E^{\mathrm{ref}}\cup E^{\mathrm{top}}$, those close in at least
one layer; for each of them both similarities are computed exactly.

\subsection{Standardisation, association and node indices}\label{sec:null}

A cosine of 0.2 means one thing in a layer where documents cite forty
references and another in a layer where they carry five keywords. Before the two
layers can be compared, each similarity has to be set against what the same pair
would obtain by chance. The binary matrix of each layer is randomised with the
curveball algorithm \citep{strona2014}, which keeps the number of features of
every document and the number of documents of every feature. A document with a
long reference list keeps a long list in the randomised matrix, and a classic
cited by half the collection stays as popular; document frequencies, and with
them the idf weights, do not change.

For each pair $(i,j)\in\mathcal{E}$ and each layer, the similarity is computed
on $B$ randomised matrices, and the empirical $p$-value is
\begin{equation}\label{eq:pvalue}
  p_{ij} = \frac{1+\#\{b : s^{(b)}_{ij}\ge s_{ij}\}}{B+1}.
\end{equation}
The pair is \emph{close} in that layer when $p_{ij}<\alpha$. Crossing the two outcomes gives four situations
(Fig.~\ref{fig:situations}): \emph{consolidation} (close in both layers),
\emph{branching} (close in references only), \emph{convergence} (close in
topics only) and \emph{detachment} (close in neither). Statistical
inference is performed at cluster level.

\begin{figure}[t]
  \centering
  {\linespread{1}\selectfont\begin{tikzpicture}[
  font=\small,
  cell/.style={draw=black!50, minimum width=4.2cm, minimum height=1.9cm, align=center, text width=3.9cm}
]
\node[cell, fill=black!10] (cons) at (0,0) {\textbf{consolidation}\\same references,\\same topics};
\node[cell, fill=black!4] (conv) at (-4.2,0) {\textbf{convergence}\\different references,\\same topics};
\node[cell, fill=black!4] (bran) at (0,-1.9) {\textbf{branching}\\same references,\\different topics};
\node[cell] (det) at (-4.2,-1.9) {\textbf{detachment}\\different references,\\different topics};
\node[font=\small\itshape, above=1mm of conv] {not close in references};
\node[font=\small\itshape, above=1mm of cons] {close in references};
\node[font=\small\itshape, anchor=east, align=right] at ([xshift=-2mm]conv.west) {close\\in topics};
\node[font=\small\itshape, anchor=east, align=right] at ([xshift=-2mm]det.west) {not close\\in topics};
\end{tikzpicture}}
  \caption{The four situations of a pair of documents, by whether the pair is
  closer than chance in the references layer (columns) and in the topic layer
  (rows).}
  \label{fig:situations}
\end{figure}

\paragraph{Association between the layers}\label{sec:association}
The association between the two layers is measured on a random sample of pairs
of documents, as
Spearman's correlation between
$s^{\mathrm{ref}}_{ij}$ and $s^{\mathrm{top}}_{ij}$, and tested by a quadratic
assignment procedure \citep{hubert1976,krackhardt1988}: the documents of the
topic layer are permuted, which keeps the structure of each layer and breaks
only the correspondence between them.

\paragraph{Node indices}\label{sec:nodes}

Four indices describe each document (Fig.~\ref{fig:nodes}). The \emph{convergence index} $CI_i$ is the
share of the topic neighbours of $d_i$ that are not reference neighbours, and
the \emph{branching index} $BI_i$ is the share of its reference neighbours that
are not topic neighbours. The \emph{participation coefficient}
\citep{battiston2014}
\begin{equation}\label{eq:participation}
  P_i = 2\left[1-\left(\frac{k^{\mathrm{ref}}_i}{o_i}\right)^2-\left(\frac{k^{\mathrm{top}}_i}{o_i}\right)^2\right],
  \qquad o_i = k^{\mathrm{ref}}_i + k^{\mathrm{top}}_i,
\end{equation}
where $k^{\mathrm{ref}}_i$ and $k^{\mathrm{top}}_i$ are the degrees of $d_i$ in
the two neighbour graphs, is 1 when the document is equally connected in both
layers and 0 when it is connected in one only. The \emph{layer alignment}
$LA_i$ is Spearman's correlation between the similarities of $d_i$ to all other
documents in the two layers.

\begin{figure}[!tbp]
  \centering
  {\linespread{1}\selectfont\resizebox{\textwidth}{!}{\begin{tikzpicture}[
  font=\small,
  doc/.style={circle, draw=black!70, fill=white, inner sep=1.2pt, minimum size=6.5mm, font=\footnotesize},
  focal/.style={doc, fill=black!12, draw=black, line width=0.9pt},
  both/.style={line width=1.3pt, black!80},
  refonly/.style={line width=1.3pt, draw={rgb,255:red,217;green,95;blue,2}, dash pattern=on 4pt off 2.5pt},
  toponly/.style={draw={rgb,255:red,117;green,112;blue,179}, dash pattern=on 0.1pt off 3pt, line cap=round, line width=2pt},
  other/.style={line width=0.7pt, black!40},
  panel/.style={font=\small\bfseries}
]
\newcommand{\docs}[1]{%
  \node[focal] (#1d1) at (0,0) {$d_1$};
  \node[doc] (#1d2) at (1.55,1.05) {$d_2$};
  \node[doc] (#1d3) at (1.75,-0.85) {$d_3$};
  \node[doc] (#1d4) at (-0.25,-1.75) {$d_4$};
  \node[doc] (#1d5) at (-1.7,-0.75) {$d_5$};
  \node[doc] (#1d6) at (-1.35,1.25) {$d_6$};
  \node[doc] (#1d7) at (0.35,2.0) {$d_7$};
}

\begin{scope}
  \docs{r}
  \node[panel] at (0,2.9) {References layer};
  \begin{scope}[on background layer]
    \draw[other] (rd2) -- (rd3); \draw[other] (rd4) -- (rd5); \draw[other] (rd6) -- (rd7);
    \draw[other] (rd2) -- (rd7);
    \draw[both] (rd1) -- (rd2); \draw[both] (rd1) -- (rd3);
    \draw[refonly] (rd1) -- (rd4); \draw[refonly] (rd1) -- (rd5);
  \end{scope}
\end{scope}

\begin{scope}[xshift=7.4cm]
  \docs{t}
  \node[panel] at (0,2.9) {Topic layer};
  \begin{scope}[on background layer]
    \draw[other] (td2) -- (td3); \draw[other] (td6) -- (td7); \draw[other] (td4) -- (td5);
    \draw[other] (td3) -- (td4);
    \draw[both] (td1) -- (td2); \draw[both] (td1) -- (td3);
    \draw[toponly] (td1) -- (td6);
  \end{scope}
\end{scope}

\begin{scope}[yshift=-2.65cm, xshift=-1.9cm]
  \draw[both] (-0.2,0) -- (0.6,0) node[right, black] {neighbour of $d_1$ in both layers};
  \draw[other] (-0.2,-0.45) -- (0.6,-0.45) node[right, black] {other pairs of neighbours};
  \draw[refonly] (6.6,0) -- (7.4,0) node[right, black] {neighbour of $d_1$ in the references layer only};
  \draw[toponly] (6.6,-0.45) -- (7.4,-0.45) node[right, black] {neighbour of $d_1$ in the topic layer only};
\end{scope}

\node[draw=black!40, rounded corners=2pt, fill=black!3, align=left, anchor=north west,
      text width=6.6cm, inner sep=6pt] (left box) at (-2.1,-3.55) {%
  $k^{\mathrm{ref}}_1=4,\ k^{\mathrm{top}}_1=3,\ o_1=k^{\mathrm{ref}}_1+k^{\mathrm{top}}_1=7$\\[4pt]
  $BI_1 = \dfrac{|\{d_4,d_5\}|}{k^{\mathrm{ref}}_1} = \dfrac{2}{4} = 0.50$\\[5pt]
  $CI_1 = \dfrac{|\{d_6\}|}{k^{\mathrm{top}}_1} = \dfrac{1}{3} = 0.33$\\[5pt]
  $P_1 = 2\left[1-\left(\tfrac{4}{7}\right)^2-\left(\tfrac{3}{7}\right)^2\right] = 0.98$\\[2pt]
  {\footnotesize $P_1$ near 1: $d_1$ is about equally connected in the two layers.}};

\path let \p1 = ($(left box.north) - (left box.south)$) in
  node[draw=black!40, rounded corners=2pt, fill=black!3, align=left, anchor=north west,
       text width=7.4cm, inner sep=6pt, minimum height=\y1]
       at ([xshift=2mm]left box.north east) {%
  \begin{tabular}{@{}l@{\hspace{5pt}}cccccc@{}}
    & $d_2$ & $d_3$ & $d_4$ & $d_5$ & $d_6$ & $d_7$\\[1pt]
    \hline
    \rule{0pt}{11pt}$s^{\mathrm{ref}}_{1j}$ & .62 & .48 & .41 & .35 & .06 & .03\\[3pt]
    $s^{\mathrm{top}}_{1j}$ & .55 & .37 & .08 & .12 & .50 & .02\\
  \end{tabular}\\[8pt]
  $LA_1 = \rho_S\!\left(s^{\mathrm{ref}}_{1\cdot},\, s^{\mathrm{top}}_{1\cdot}\right) = 0.60$\\[4pt]
  {\footnotesize Spearman's correlation over all the other documents, not only the
  neighbours: $d_4$ and $d_5$ rank high in references and low in topics, $d_6$ the
  opposite.}};
\end{tikzpicture}}}
  \caption{The four node indices on a small example. The seven documents are in
  the same position in the two layers; the edges are the pairs of nearest
  neighbours. Those of the focal document $d_1$ are solid when $d_1$ and its
neighbour are close in both layers, dashed when they are close in references
only and dotted when they are close in topics only. Two of
  its four reference neighbours are not topic neighbours ($BI_1=0.50$), one of its
  three topic neighbours is not a reference neighbour ($CI_1=0.33$), and its
  neighbours are split almost evenly between the layers ($P_1=0.98$). The layer
  alignment $LA_1$ uses the similarities of $d_1$ to all the other documents, not
  only to its neighbours.}
  \label{fig:nodes}
\end{figure}

\subsection{Roots and themes}\label{sec:clusters}

Each neighbour graph, weighted by the similarities, is partitioned by the
Louvain algorithm \citep{blondel2008}; other community detection algorithms,
such as Leiden \citep{traag2019}, could be used. These general-purpose algorithms were not designed to
reconstruct topics, and their results need to be checked rather than taken as
given \citep{held2022}. Their clusters also change with the random order in
which nodes are visited, so each partition is the consensus of
several runs. Following \citet{lancichinetti2012}, at each
iteration every edge is weighted by the share of runs that put its two documents
in the same cluster, the edges below one half are dropped, and the graph is
clustered again until all runs agree. The consensus is computed on the edges of
the neighbour graph only. The mean adjusted Rand index \citep{hubert1985} between single
runs measures how stable the partition of each layer is.

\paragraph{Why ``roots''}
The clusters of the references layer are called \emph{roots}, for intellectual
roots. In the classical reading of citation analysis the works cited by a group
of documents are the roots on which it builds: the shared exemplars through
which a community recognises its problems and methods \citep{kuhn1970}, the
intellectual base that \citet{persson1994} distinguishes from the research front
citing it. \citet{havemann2021} delimits topics by clustering the citation links
to highly cited sources; a root is instead a group of citing documents, a research
front \citep{boyack2010} named after the works it cites in common. It is not a
research school, a group held together by training, institutions and a leader
\citep{morrell1972,geison1981}: coupling data observe none of these ties.

\paragraph{Why ``themes''}
The clusters of the topic layer are called \emph{themes}, for conceptual
themes. In co-word analysis the clusters of the network of terms are read as the
themes of a field, which together form its conceptual structure
\citep{callon1991,aria2017}. A theme of the framework is a group of documents
rather than of terms: the documents that use the same concepts, whatever they
cite. Roots and themes are thus two readings of the same documents, by the base
they draw on and by the concepts they use.

Each root is named by the titles of its strong references. The strength of
reference $m$ in root $R_a$ is $n_{am}^2/n_m$, where $n_{am}$ is the number of
documents of $R_a$ that cite it and $n_m$ the number of documents that cite it
in the whole collection. The strength is the product
$n_{am}\cdot(n_{am}/n_m)$ of how many documents of the root cite the reference
and of the share of its citations that come from the root: a strong reference is
cited by many documents of the root and by few outside it. The terms of the
titles of the strongest references, weighted by their strength and by an inverse frequency across
roots, give the name. Titles come from the collection itself, from the
reference strings, or from an external source such as OpenAlex
\citep{priem2022}. Themes are named by their most characteristic terms, with
the same score.

\subsection{Links between roots and themes}\label{sec:links}

Roots and themes are two partitions of the same documents. Their
cross-tabulation $\mathbf{N}=[n_{ab}]$ counts the documents of root $R_a$ in
theme $T_b$, over the documents that have both. A root and a theme are
\emph{linked} when the root contributes to the theme more documents than
expected from the sizes of the two, judged by the adjusted standardised residual
\citep{haberman1973}
\begin{equation}\label{eq:residual}
  r_{ab} = \frac{n_{ab}-e_{ab}}{\sqrt{e_{ab}\,(1-n_{a\cdot}/n)(1-n_{\cdot b}/n)}},
  \qquad e_{ab}=\frac{n_{a\cdot}\,n_{\cdot b}}{n},
\end{equation}
where $n_{a\cdot}$ and $n_{\cdot b}$ are the row and column totals and $n$ the
grand total. The link rule is $r_{ab}>z$ and $n_{ab}\ge n_{\min}$. The
residual, approximately standard normal under independence, ensures that the
overlap is not an effect of the sizes of the root and the theme; the minimum
count keeps out overlaps of very few documents, which can reach a high residual
when the root is small. A single rule, read by rows and by columns, classifies
both sides:
\begin{itemize}
  \item a \emph{branching root} is linked to two or more themes, a \emph{convergent theme} to two or more roots;
  \item a root or a theme with one link is \emph{consolidated};
  \item one without links is \emph{dispersed}.
\end{itemize}
The agreement of the two partitions as a whole is summarised by the normalised
mutual information \citep{danon2005} and the adjusted Rand index
\citep{hubert1985}.

\begin{tcolorbox}[colback=black!3, colframe=black!45, boxrule=0.5pt, arc=2pt,
  title={Box 1. Illustrative example: from the roots $\times$ themes table to the structures},
  fonttitle=\small\bfseries, coltitle=black, colbacktitle=black!10]
\small
Four roots and three themes share 126 documents. For each cell,
Table~\ref{tab:box} gives the documents $n_{ab}$ and the adjusted standardised
residual $r_{ab}$ of Eq.~\eqref{eq:residual}; links are in bold.

\begin{center}
\captionof{table}{Roots $\times$ themes table of Box~1: documents $n_{ab}$,
adjusted standardised residuals $r_{ab}$ and the structures they give. Links in
bold.}\label{tab:box}
\begin{tabular}{@{}lrrrrrrl@{}}
\toprule
 & \multicolumn{2}{c}{$T_1$} & \multicolumn{2}{c}{$T_2$} & \multicolumn{2}{c}{$T_3$} & Root \\
\cmidrule(lr){2-3}\cmidrule(lr){4-5}\cmidrule(lr){6-7}
 & $n_{ab}$ & $r_{ab}$ & $n_{ab}$ & $r_{ab}$ & $n_{ab}$ & $r_{ab}$ & \\
\midrule
$R_1$ & \textbf{30} & $\mathbf{4.2}$ & \textbf{25} & $\mathbf{3.3}$ & 5 & $-7.0$ & branching \\
$R_2$ & 4 & $-2.5$ & 6 & $-1.1$ & \textbf{20} & $\mathbf{3.3}$ & consolidated \\
$R_3$ & 6 & $-1.8$ & 4 & $-2.2$ & \textbf{22} & $\mathbf{3.8}$ & consolidated \\
$R_4$ & 0 & $-1.4$ & 0 & $-1.3$ & 4 & $2.5$ & dispersed \\
\midrule
Theme & \multicolumn{2}{c}{consolidated} & \multicolumn{2}{c}{consolidated} & \multicolumn{2}{c}{convergent} & \\
\bottomrule
\end{tabular}

\end{center}

$R_1$ contributes to $T_1$ and $T_2$ more documents than their sizes would
predict: it is a \emph{branching root}, one intellectual base that has grown
into two topics. $T_3$ draws on two roots, $R_2$ and $R_3$: it is a
\emph{convergent theme}, a topic reached from two bases. $R_2$, $R_3$, $T_1$ and
$T_2$ have one link each and are \emph{consolidated}. $R_4$ has a residual of
2.5 with $T_3$, but on four documents, fewer than $n_{\min}$, so $R_4$
has no link and is \emph{dispersed}. The same rule is read along the rows for
the roots and down the columns for the themes, so the two classifications
cannot disagree about which links exist.

\smallskip\noindent\textit{Note.} In this example, the thresholds of the link rule are a minimum of
$n_{\min}=5$ documents and $z=2$.
\end{tcolorbox}

\subsection{Cohesion and lift}\label{sec:cohesion}

How close the documents of a cluster are, and how close two roots are, is
measured as a mean cosine relative to the mean cosine of two random documents.
Let $\mathbf{u}_i=\mathbf{x}_i/\lVert\mathbf{x}_i\rVert$ be the unit row of
$d_i$ in a layer and $\mathbf{S}_C=\sum_{i\in C}\mathbf{u}_i$ the sum over a
set of documents $C$.

\begin{proposition}\label{prop:cohesion}
For two disjoint sets of documents $G$ and $H$,
\begin{equation}\label{eq:between}
  \bar s(G,H) = \frac{1}{|G|\,|H|}\sum_{i\in G}\sum_{j\in H} s_{ij}
  = \frac{\mathbf{S}_G^{\top}\mathbf{S}_H}{|G|\,|H|},
\end{equation}
and, for a set $G$ of non-zero rows,
\begin{equation}\label{eq:within}
  \bar s(G) = \frac{1}{|G|(|G|-1)}\sum_{i\in G}\sum_{j\in G,\,j\ne i} s_{ij}
  = \frac{\lVert\mathbf{S}_G\rVert^2-|G|}{|G|(|G|-1)}.
\end{equation}
\end{proposition}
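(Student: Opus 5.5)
The plan rests on rewriting each cosine of Eq.~\eqref{eq:cosine} as an inner product of unit rows, and then exchanging the double sums for inner products of the sums $\mathbf{S}_G$ and $\mathbf{S}_H$ by bilinearity. First, for non-zero rows $\mathbf{x}_i,\mathbf{x}_j$, the definition gives $s_{ij}=\mathbf{u}_i^{\top}\mathbf{u}_j$ directly. A row of zeros (a document with no kept reference or term) needs a convention. I would take $\mathbf{u}_i=\mathbf{0}$ and $s_{ij}=0$ for such a row, which is consistent with the inner-product form. This is why Eq.~\eqref{eq:between} needs no restriction while Eq.~\eqref{eq:within} requires non-zero rows.

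For Eq.~\eqref{eq:between}, write
\[
\sum_{i\in G}\sum_{j\in H} s_{ij}=\sum_{i\in G}\sum_{j\in H}\mathbf{u}_i^{\top}\mathbf{u}_j=\Bigl(\sum_{i\in G}\mathbf{u}_i\Bigr)^{\top}\Bigl(\sum_{j\in H}\mathbf{u}_j\Bigr)=\mathbf{S}_G^{\top}\mathbf{S}_H,
\]
and divide by $|G|\,|H|$. Disjointness plays no role in the algebra. It only guarantees that no term $s_{ii}$ appears, so that the quantity is a genuine mean over pairs of distinct documents.

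For Eq.~\eqref{eq:within}, expand $\lVert\mathbf{S}_G\rVert^2=\mathbf{S}_G^{\top}\mathbf{S}_G=\sum_{i\in G}\sum_{j\in G}\mathbf{u}_i^{\top}\mathbf{u}_j$. I would then separate the diagonal terms $i=j$ from the off-diagonal ones. Each diagonal term equals $\lVert\mathbf{u}_i\rVert^2=1$, precisely because every row of $G$ is non-zero. The diagonal therefore contributes $|G|$, and the off-diagonal part is $\sum_{i\ne j}s_{ij}=\lVert\mathbf{S}_G\rVert^2-|G|$. Dividing by the number $|G|(|G|-1)$ of ordered pairs of distinct documents, which requires $|G|\ge 2$, gives the claim.

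No step is a real obstacle. The only point needing care is the diagonal term in Eq.~\eqref{eq:within}. There the non-zero hypothesis is used: a zero row would give $\lVert\mathbf{u}_i\rVert^2=0$, not $1$, and the formula would overcount. I would also note that both means are over ordered pairs, and that this agrees with an average over unordered pairs by the symmetry $s_{ij}=s_{ji}$.
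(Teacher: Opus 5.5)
Your proposal is correct and is the paper's proof: write $s_{ij}=\mathbf{u}_i^{\top}\mathbf{u}_j$, use bilinearity to turn the double sum into $\mathbf{S}_G^{\top}\mathbf{S}_H$, and remove the $|G|$ unit diagonal terms from $\lVert\mathbf{S}_G\rVert^2$. Your extra remarks are sound refinements that the paper leaves implicit: the convention $\mathbf{u}_i=\mathbf{0}$ for empty rows, which the filtering of singleton references and terms can actually produce, and the need for $|G|\ge 2$. One small correction: with a zero row the formula would subtract one too many diagonal terms, so it would undercount the off-diagonal sum rather than overcount it.
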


\begin{proof}
With the cosine of Eq.~\eqref{eq:cosine}, $s_{ij}=\mathbf{u}_i^{\top}\mathbf{u}_j$.
Then
\[
  \textstyle\sum_{i\in G}\sum_{j\in H}\mathbf{u}_i^{\top}\mathbf{u}_j
  =(\sum_{i\in G}\mathbf{u}_i)^{\top}(\sum_{j\in H}\mathbf{u}_j)
  =\mathbf{S}_G^{\top}\mathbf{S}_H .
\]
For the second equality,
$\lVert\mathbf{S}_G\rVert^2=\sum_{i\in G}\sum_{j\in G}\mathbf{u}_i^{\top}\mathbf{u}_j$
includes the $|G|$ terms $\mathbf{u}_i^{\top}\mathbf{u}_i=1$, which are
subtracted.
\end{proof}

With the proposition the indices are exact and no pair is enumerated. With $\bar s(\mathcal{D})$ the mean cosine of two random documents of the
collection, the \emph{cohesion} ($\kappa$) of a cluster $C$ and the \emph{lift} ($\lambda$)
between two roots are
\begin{equation}\label{eq:lift}
  \kappa(C) = \frac{\bar s(C)}{\bar s(\mathcal{D})}, \qquad
  \lambda(R_a,R_b) = \frac{\bar s(R_a,R_b)}{\bar s(\mathcal{D})},
\end{equation}
computed in both layers. A lift of 1 means that a document of $R_a$ and one of
$R_b$ are as close as two random documents. Cohesion and lift use all
similarities, not only those of the neighbour graphs, and always the cosine.

\subsection{Pairs of roots, static and over time}\label{sec:pairs}\label{sec:evolution}

The links relate roots to themes; the framework also relates roots to each
other. Two roots are distinct groups in the references layer by construction,
but they may be more or less distant in it, and they may or may not write about
the same subjects. Reading every pair of roots in both layers at once separates
four relations between intellectual bases. Two roots close in both layers share
part of their base and apply it to the same subjects. Two roots close in
references only share part of their base but apply it to different subjects.
Two roots close in topics only are the cluster-level counterpart of convergence:
different bases that have reached the same subjects. Two roots close in neither
are unrelated. The lifts of Eq.~\eqref{eq:lift} measure the two proximities on
the same scale, so each pair of roots is described by
$(\lambda^{\mathrm{ref}},\lambda^{\mathrm{top}})$, and the value 1, the
proximity of two random documents, separates close from not close in each layer.
When two roots close in topics are both linked to the same theme, that theme
names what brings them together.

These relations are not fixed. Bases that start apart may come to write about
the same subjects, and bases that once shared their topics may specialise. The
evolutionary view follows the same two lifts over time. Longitudinal science
mapping usually clusters each period separately and then links the clusters of
successive periods \citep{aria2026b}. Here documents belong to one period only,
so clusters found separately in each period could not be matched through their
members. The roots are therefore defined once, on the whole
collection, and in each period the lifts are computed on the documents of that
period, relative to two random documents of the same period. A root is
followed in the periods in which it has enough documents. For each pair of roots
seen in at least two periods, the slopes of $\log_2\lambda^{\mathrm{ref}}$ and
$\log_2\lambda^{\mathrm{top}}$ on the period index summarise the trend. With a
threshold $m$ on the slopes, a pair of roots is \emph{converging} when it gets
closer in topics but not in references, \emph{diverging} when it gets farther in
topics but not in references, \emph{consolidating} when it gets closer in both
and \emph{drifting apart} when it gets farther in both; when its topic proximity
does not change, it is \emph{closer in references} or \emph{farther in
references} if its reference proximity does, and \emph{stable} otherwise. Pairs
that move from one of the four relations to another are the ones that change
relation.

Two displays are proposed for this reasoning. The \emph{plane of the pairs of
roots} shows the static view: each pair is a point at
$(\lambda^{\mathrm{ref}},\lambda^{\mathrm{top}})$ on logarithmic axes, the lines
$\lambda=1$ split the plane into the four areas \emph{close in both},
\emph{close in topics only}, \emph{close in references only} and \emph{close in
neither}, and pairs close in topics whose roots are linked to the same theme are
labelled with it. The
\emph{trajectory plot} shows the evolutionary view on the same plane: each pair
that changes area is drawn as a path from its first period to its last, with the
line type giving its trend. Figure~\ref{fig:pairs-example} sketches both
displays on a stylised example.

\begin{figure}[!tbp]
  \centering
  {\linespread{1}\selectfont\resizebox{\textwidth}{!}{\begin{tikzpicture}[
  font=\small,
  area/.style={font=\scriptsize\scshape, text=black!55, align=center},
  pt/.style={circle, fill=black, inner sep=0pt, minimum size=4.5pt},
  start/.style={circle, draw=black, fill=white, line width=0.7pt, inner sep=0pt, minimum size=5pt},
  mid/.style={circle, fill=black, inner sep=0pt, minimum size=3.5pt},
  lab/.style={font=\footnotesize, inner sep=1.5pt},
  per/.style={font=\scriptsize, text=black!60, inner sep=1pt},
  conv/.style={line width=1pt, -{Stealth[length=2.4mm]}},
  divg/.style={line width=1pt, dash pattern=on 4pt off 2pt, -{Stealth[length=2.4mm]}},
  cons/.style={line width=1.3pt, dash pattern=on 0.1pt off 2.2pt, line cap=round, -{Stealth[length=2.4mm]}},
  panel/.style={font=\small\bfseries}
]
\newcommand{\plane}{%
  \draw[black!50] (0,0) rectangle (6,6);
  \draw[black!45, dashed] (3,0) -- (3,6);
  \draw[black!45, dashed] (0,3) -- (6,3);
  \node[area] at (1.5,5.55) {close in\\[-2pt]topics only};
  \node[area] at (4.5,5.55) {close in both};
  \node[area] at (1.5,0.45) {close in neither};
  \node[area] at (4.5,0.45) {close in\\[-2pt]references only};
  \node[font=\scriptsize] at (3,-0.3) {1};
  \node[font=\scriptsize, left] at (0,3) {1};
  \node[below=4mm] at (3,0) {$\lambda^{\mathrm{ref}}$ (log scale)};
  \node[rotate=90, above=7mm] at (0,3) {$\lambda^{\mathrm{top}}$ (log scale)};
}

\begin{scope}
  \plane
  \node[panel] at (3,6.55) {Static view};
  \node[pt] (a) at (4.4,4.3) {}; \node[lab, below right] at (a) {$R_1/R_2$};
  \node[pt] (b) at (1.6,4.2) {}; \node[lab, below] at (b) {$R_3/R_4$};
  \node[lab, above, font=\scriptsize\itshape] at (b) {$\to T_5$};
  \node[pt] (c) at (4.6,1.7) {}; \node[lab, below] at (c) {$R_1/R_5$};
  \node[pt] (d) at (1.3,1.6) {}; \node[lab, below] at (d) {$R_2/R_6$};
\end{scope}

\begin{scope}[xshift=7.6cm]
  \plane
  \node[panel] at (3,6.55) {Over time};
  \node[start] (c1) at (1.1,1.4) {}; \node[mid] (c2) at (1.6,2.8) {};
  \draw[conv] (c1) -- (c2) -- (2.1,4.4); \node[mid] at (1.6,2.8) {};
  \node[lab, right] at (2.0,3.9) {converging};
  \node[start] (d1) at (4.0,4.6) {}; \node[mid] (d2) at (4.3,3.4) {};
  \draw[divg] (d1) -- (d2) -- (4.5,1.8); \node[mid] at (4.3,3.4) {};
  \node[lab, right] at (4.6,4.6) {diverging};
  \node[start] (s1) at (3.6,1.3) {}; \node[mid] (s2) at (4.6,2.4) {};
  \draw[cons] (s1) -- (s2) -- (5.4,3.9); \node[mid] at (4.6,2.4) {};
  \node[lab, right] at (5.0,2.1) {consolidating};
\end{scope}

\begin{scope}[yshift=-1.55cm, xshift=0.2cm]
  \node[start] at (0,0) {}; \node[right, font=\footnotesize] at (0.15,0) {first period};
  \node[mid] at (2.6,0) {}; \node[right, font=\footnotesize] at (2.7,0) {intermediate period};
  \draw[conv] (6.2,0) -- (7.1,0); \node[right, font=\footnotesize] at (7.1,0) {converging};
  \draw[divg] (9.1,0) -- (10.0,0); \node[right, font=\footnotesize] at (10.0,0) {diverging};
  \draw[cons] (11.8,0) -- (12.7,0); \node[right, font=\footnotesize] at (12.7,0) {consolidating};
\end{scope}
\end{tikzpicture}}}
  \caption{The two displays of the pairs of roots on a stylised example.
  Left, the static view: each pair of roots is a point at
  $(\lambda^{\mathrm{ref}},\lambda^{\mathrm{top}})$, and the lines at 1 split
  the four relations; $R_3$ and $R_4$, close in topics only, are both linked to
  theme $T_5$, which labels the pair. Right, the view over time: each pair moves
  from its first period (hollow point) to its last (arrow), and the line type
  gives its trend. Positions are illustrative.}
  \label{fig:pairs-example}
\end{figure}

\subsection{Robustness to the number of neighbours}\label{sec:robustness}

The number of neighbours $k$ sets how dense the two graphs are, and so how
fine roots and themes are. As $k$ grows, clusters tend to merge rather than
reshuffle, so that $k$ acts largely as
the resolution does in publication-level classifications, where the granularity
has to be chosen against an external criterion \citep{sjogarde2020}; partitions found at different $k$ are
nonetheless different. Rather than fixing
a single value, the framework reports which of its conclusions persist.

Only the neighbour graphs depend on $k$; the layers, the weights and the
documents do not. Roots and themes are therefore found again at other values of
$k$, below and above the one chosen, from the same layers, with all other
choices unchanged; the partitions are those of a new analysis with that $k$, at
the cost of the clustering only. A link between $R_a$ and $T_b$
\emph{persists} at another $k$ when most of its $n_{ab}$ documents fall in a
single cell of the new roots $\times$ themes table that is itself a link; a
root or a theme keeps its structure when most of its documents fall in a new
root or theme with the same structure. The \emph{persistence} of a link, root or
theme is the share of the other values of $k$ at which this happens. The
interpretation rests on the links and the structures that persist at every $k$.

\section{Empirical application}\label{sec:application}

The framework was applied to three collections from three bibliographic
sources, chosen to differ in field, in size and in the way cited references are
recorded. The \emph{management collection} is the one used throughout
\citet{aria2026book}: \BookDocs{} original research articles in English,
indexed by Web of Science between 1985 and 2025, that use bibliometric or
scientometric methods in management and business. The \emph{explainable AI
collection} contains \XaiDocs{} articles from Scopus whose titles mention
``explainable artificial intelligence'', ``explainable AI'' or ``XAI''
(2017--2025).
The \emph{disinformation collection} contains \DisDocs{} journal articles from
OpenAlex \citep{priem2022} with the term in the title (2000--2025, duplicates
removed). The OpenAlex data, the identifiers of the Scopus records and the
scripts of all analyses are deposited on Zenodo \citep{aria2026data}. This section presents the
management collection, a field whose literature is familiar to most readers and
easy to interpret. The other two collections are reported in full in the
Supplementary Material (Section~S5), with their data, analysis, results and
interpretation; the findings they share with the management collection are
taken up in the conclusions.

\paragraph{Analytical choices}
A document enters the analysis when it has at least one cited reference and at
least one term. The topic layer is built from the author keywords and the
Keywords Plus, since these are available for almost all documents; titles and
abstracts would be used instead if keywords were missing for a large share of
the collection. Terms used by a single document are removed,
and so are the placeholders that Web of Science writes for unreadable
references (``NO TITLE CAPTURED'', ``ANONYMOUS''), which would couple unrelated
documents. The keywords were cleaned with lists of synonyms and terms to remove:
spelling variants, singular and plural forms and acronyms were merged into one
term, and the terms of the search query, generic words and indexing boilerplate
were removed; the lists are available with the replication material. The
analysis uses $k=10$ nearest neighbours, $B=99$ randomised matrices and
$\alpha=0.05$ for the null model, a random sample of $10^5$ pairs for the
association, consensus partitions of 20 runs, and the link rule with $z=2$ and
$n_{\min}=5$. Clusters of fewer than five documents are not kept, and their
documents are reported as being in no root or in no theme. Persistence is
measured at $k=5$ and $k=20$. In the evolution, a root is followed in a period
when it has at least five documents in it, and the threshold on the slopes is
$m=0.1$. These are the choices of a first analysis, not part of the method;
Section~\ref{sec:sensitivity} varies them. All the analyses were run with the development
version of \textit{bibliometrix} \citep{aria2017}, available on GitHub, in
which the framework is implemented (Supplementary Material, Section~S7).

\subsection{The management collection}\label{sec:book}

Of the \BookDocs{} documents, \BookAnalysed{} have both cited references and
keywords and enter the analysis; keywords cover \BookTopicCoverage{} of the
documents with references, so the topic layer is built from them. The two
layers carry largely complementary information. \BookPairsBoth{}\% of the
neighbour pairs are neighbours in both layers, and on random pairs the
correlation between the two similarities is \BookRhoRandom{}: positive, so that
management studies that cite the same works tend to share keywords, but far
from redundant. Citing the same works and writing about the same subjects are
related but distinct ways in which two documents are close, and the two layers
keep them apart.

The references layer gives \BookRoots{} roots, \BookBranching{} of them branching,
and the topic layer \BookThemes{} themes, \BookConvergent{} of them convergent,
connected by \BookLinks{} links (Table~\ref{tab:book-roots} and
Figs.~\ref{fig:book-matrix} and~\ref{fig:book-links}; the themes are listed in
the Supplementary Material, Section~S3). Of these links, \BookLinksPersist{}
persist at every $k$; so do \BookBranchingPersist{} of the branching roots and
\BookConvergentPersist{} of the convergent themes. The interpretation below rests
on these. The roots $\times$ themes table (Fig.~\ref{fig:book-matrix}) shows
the whole relation at a glance: most roots concentrate their documents in a few
themes, while the largest root, R1, has documents in almost all of them.
Table~\ref{tab:book-profiles} gives, for the \BookProfiledRoots{} roots whose
structure persists, the keywords that characterise their documents and their strongest
references, on which the reading that follows is based.

\begin{figure}[!tbp]
  \centering
  \includegraphics[width=\textwidth]{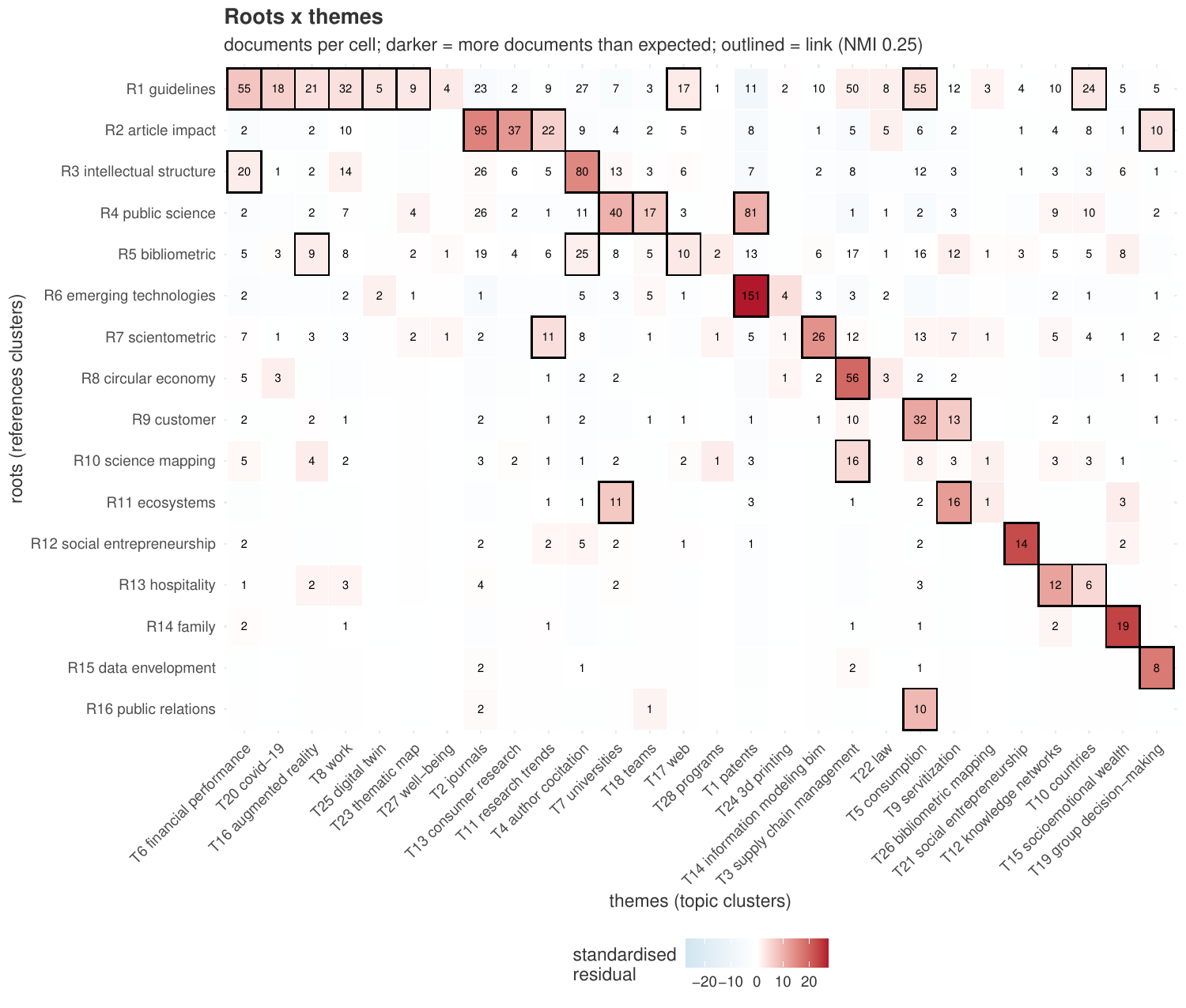}
  \caption{Roots $\times$ themes table of the management collection. Each cell
  gives the documents $n_{ab}$ of root $R_a$ in theme $T_b$; the shading is the
  adjusted standardised residual $r_{ab}$, darker when the root contributes more
  documents than expected, and outlined cells are links.}
  \label{fig:book-matrix}
\end{figure}

\begin{table}[!tbp]
  \centering
  \caption{Roots of the management collection. Name: the most characteristic terms of
  the titles of the root's strong references (Section~\ref{sec:clusters}).
  Persists: the structure of the root is the same at $k=5$ and $k=20$.
  Linked themes: themes with $r_{ab}>2$ and $n_{ab}\ge5$, ordered by documents;
  an asterisk marks a link that does not persist at every $k$.}
  \label{tab:book-roots}
  \scriptsize
  \resizebox{\textwidth}{!}{\begin{tabular}{@{}llrllp{3.2cm}@{}}
\toprule
Root & Name (strong references) & Docs & Structure & Persists & Linked themes \\
\midrule
R1 & guidelines; literature reviews; conduct & 434 & branching & yes & T5$^{*}$, T6$^{*}$, T8, T10, T16, T20, T17$^{*}$, T23, T25$^{*}$ \\
R2 & article impact; citation; operations management & 240 & branching & no & T2, T13, T11, T19$^{*}$ \\
R3 & intellectual structure; author co-citation; strategic management & 226 & branching & yes & T4, T6$^{*}$ \\
R4 & public science; collaboration; geographic localization & 225 & branching & yes & T1, T7, T18 \\
R5 & bibliometric; bibliographic coupling; scientific papers & 194 & branching & no & T4, T17$^{*}$, T16$^{*}$ \\
R6 & emerging technologies; patent; forecasting & 190 & consolidated & no & T1 \\
R7 & scientometric; emerging trends; building information & 118 & branching & yes & T14, T11$^{*}$ \\
R8 & circular economy; supply chain; chain management & 81 & consolidated & no & T3 \\
R9 & customer; experience; engagement & 73 & branching & no & T5, T9$^{*}$ \\
R10 & science mapping; software tool; scimat & 63 & consolidated & no & T3$^{*}$ \\
R11 & ecosystems; entrepreneurial; two-sided markets & 39 & branching & no & T9, T7 \\
R12 & social entrepreneurship; hybrid; social enterprises & 33 & consolidated & no & T21 \\
R13 & hospitality; tourism; strategic management & 33 & branching & no & T12$^{*}$, T10 \\
R14 & family; firms; socioemotional wealth & 27 & consolidated & yes & T15 \\
R15 & data envelopment; efficiency; health care & 14 & consolidated & yes & T19 \\
R16 & public relations; knowledge construction; dialogic & 13 & consolidated & yes & T5 \\
\bottomrule
\end{tabular}
}
\end{table}

\IfFileExists{tab_book_root_profiles.tex}{%
\begin{table}[!tbp]
  \centering
  \caption{Profiles of the roots of the management collection whose structure
  persists, i.e.\ is the same at $k=5$ and $k=20$ (names and sizes in
  Table~\ref{tab:book-roots}). Characteristic keywords:
  the five keywords with the highest $n_{af}^2/n_f$, where $n_{af}$ is the number
  of documents of the root with keyword $f$ and $n_f$ the number in the
  collection. Strong references: the three references with the highest strength
  (Section~\ref{sec:clusters}), in alphabetical order; first author, year and
  source as recorded in the cited reference.}
  \label{tab:book-profiles}
  \scriptsize
  \begin{tabular}{@{}l l p{0.33\linewidth} p{0.42\linewidth}@{}}
\toprule
Root & Structure & Characteristic keywords & Strong references \\
\midrule
R1 & branching & \raggedright systematic literature review; artificial intelligence; behavior; sustainability; business & \raggedright Aria (2017), J Informetr; Donthu (2021), J Bus Res; Zupic (2015), Organ Res Methods \tabularnewline
\addlinespace
R3 & branching & \raggedright author cocitation; intellectual structure; co-citation analysis; resource-based view; firms & \raggedright Barney (1991), J Manage; McCain (1990), J Am Soc Inform Sci; Ramos-Rodríguez (2004), Strategic Manage J \tabularnewline
\addlinespace
R4 & branching & \raggedright indicators; innovation; universities; technology transfer; patents & \raggedright Griliches (1990), J Econ Lit; Jaffe (1993), Q J Econ; Katz (1997), Res Policy \tabularnewline
\addlinespace
R7 & branching & \raggedright emerging trends; information modeling bim; scientometric analysis; construction industry; topic modeling & \raggedright Chen (2006), J Am Soc Inf Sci Tec; Hosseini (2018), Automat Constr; Kleinberg (2003), Data Min Knowl Disc \tabularnewline
\addlinespace
R14 & consolidated & \raggedright socioemotional wealth; family firms; family business; ownership; generational involvement & \raggedright Berrone (2012), Fam Bus Rev; Chua (1999), Entrep Theory Pract; Gómez-Mejía (2007), Admin Sci Quart \tabularnewline
\addlinespace
R15 & consolidated & \raggedright technical efficiency; data envelopment analysis; stochastic frontier analysis; bootstrap; literature survey & \raggedright Banker (1984), Manage Sci; Charnes (1978), Eur J Oper Res; Farrell (1957), J R Stat Soc Ser A-G \tabularnewline
\addlinespace
R16 & consolidated & \raggedright knowledge construction; public relations; public-relations research; invisible college; engagement & \raggedright Morehouse (2018), J Public Relat Res; Pasadeos (1992), Journal of Public Relations Research; Pasadeos (2010), J Public Relat Res \tabularnewline
\bottomrule
\end{tabular}

\end{table}}{}

\begin{figure}[!tbp]
  \centering
  \includegraphics[width=\textwidth]{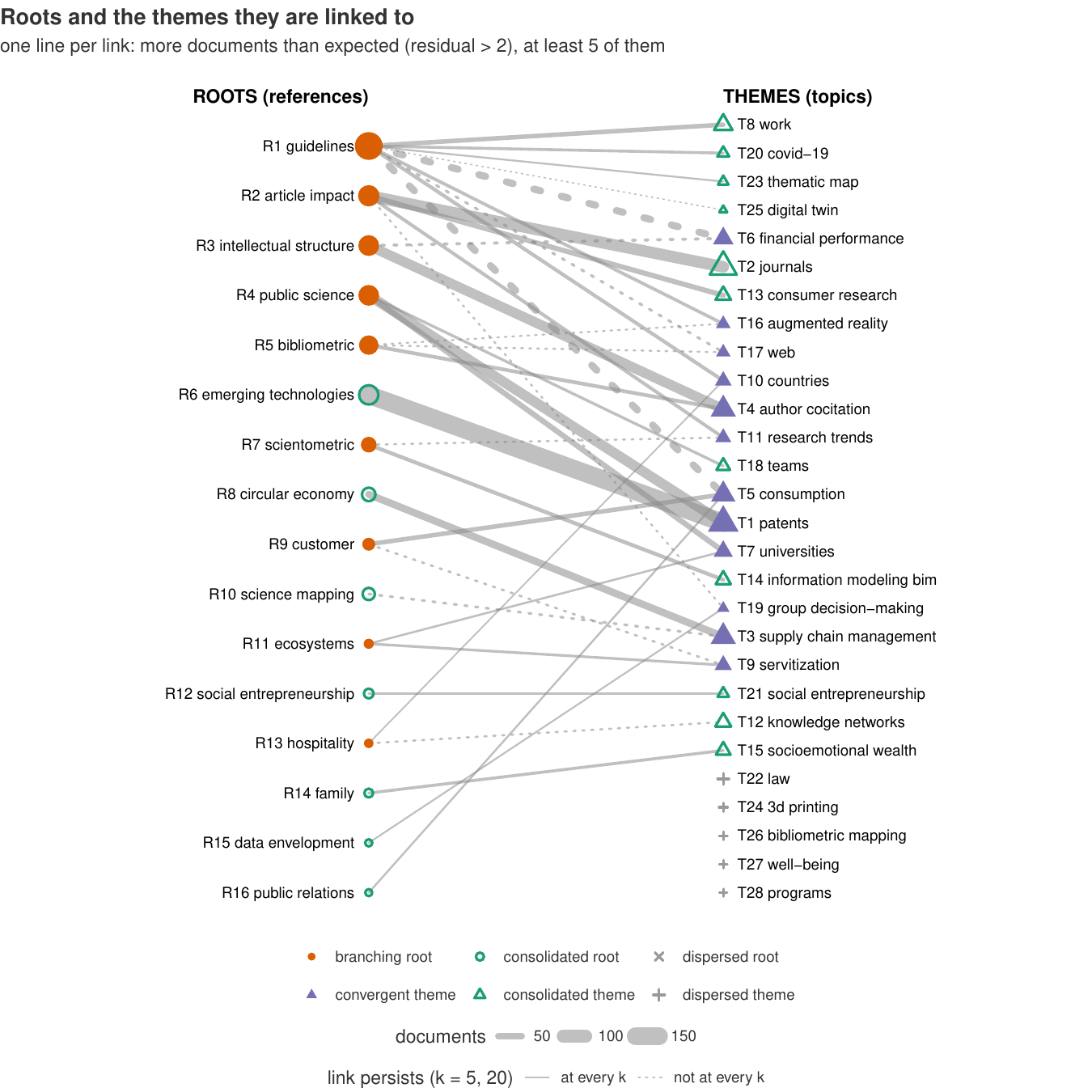}
  \caption{Roots and themes of the management collection and their links. Roots on the
  left, themes on the right; the width of a line is the number of documents of
  the link, and dashed lines are links that do not persist at every $k$. Roots
  are circles and themes triangles: filled when branching or convergent, open when
  consolidated, a cross when dispersed.}
  \label{fig:book-links}
\end{figure}

The four persistent branching roots show two different ways in which an
intellectual base grows into several topics. Three of them are substantive
bases. R4, named ``public science; collaboration; geographic localization'' after
the economics of innovation and science it draws on, feeds three themes, on
patents and technology indicators (T1), on universities and academic
entrepreneurship (T7), and on research teams and interdisciplinarity (T18), all
three links persisting. R3, ``intellectual structure; author co-citation;
strategic management'', is the base of the co-citation studies of strategic
management and feeds both the theme on author co-citation and dynamic
capabilities (T4) and, less firmly, the one on corporate social responsibility
and financial performance (T6). R7, ``scientometric; emerging trends'', moves
between research trends and knowledge management (T11) and building information
modelling (T14). The fourth, R1, ``guidelines; literature reviews; conduct'', is
of another kind: the largest root (\BookRootOneSize{} documents) is the methodological base of
bibliometric reviews, and its documents spread over nine themes, from work and
ethical leadership to tourism, virtual reality and nursing during the pandemic.
Five of its nine links persist. A root of this kind is a shared method rather
than a shared subject, and the framework identifies it as such, since its
documents are close in references and far apart in topics.

Convergence appears where different bases meet on a subject. The theme on
patents, indicators and emerging technologies (T1) draws on the economics of
public science (R4) and on technology forecasting (R6, ``emerging technologies;
patent; forecasting''); the theme on author co-citation and dynamic capabilities
(T4) on the co-citation tradition (R3) and on the literature on bibliographic
coupling (R5). Both convergences persist at every $k$, links included.

\begin{figure}[!tbp]
  \centering
  \includegraphics[width=\textwidth]{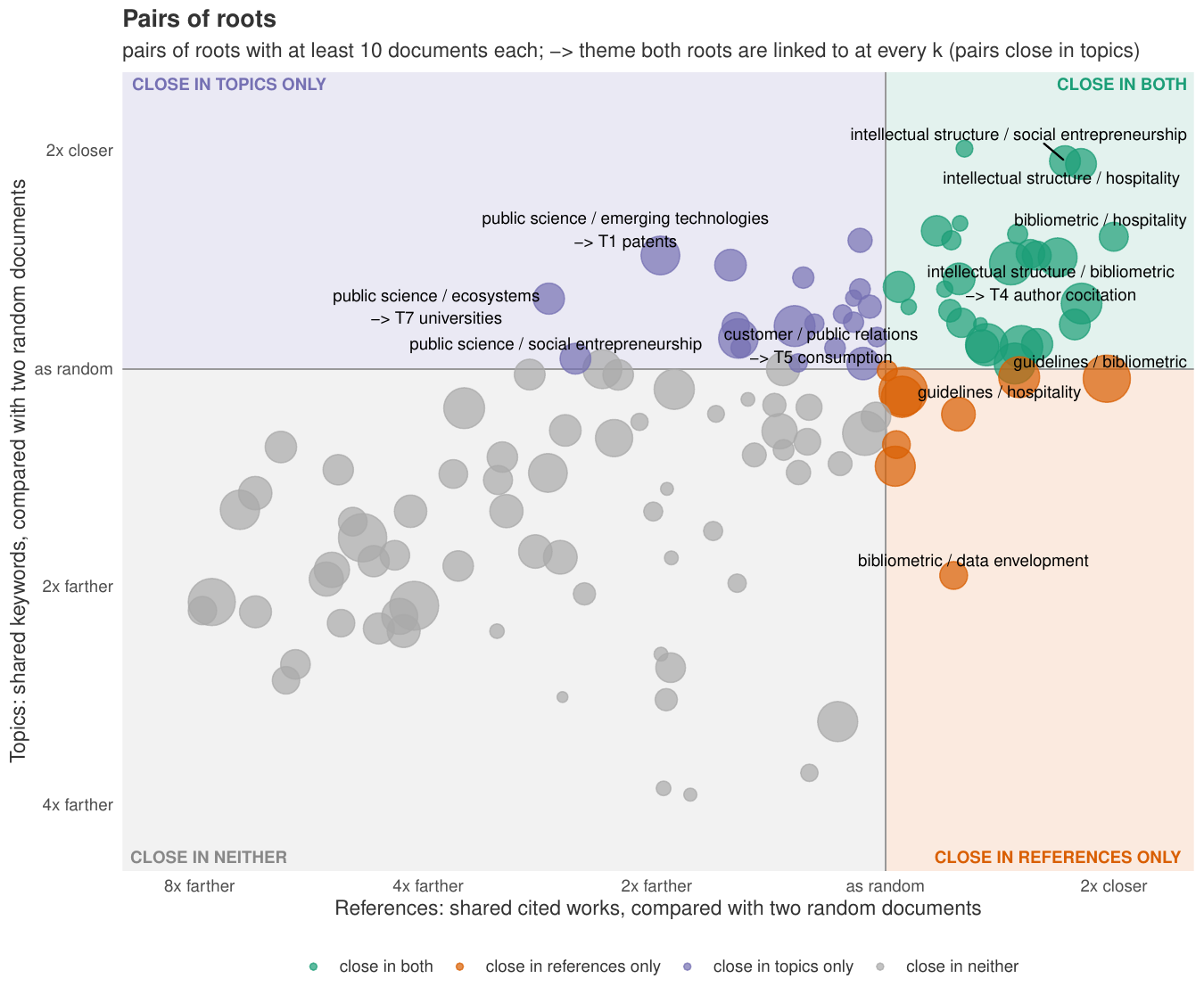}
  \caption{Pairs of roots of the management collection on the plane of their lifts in
  references (horizontal) and topics (vertical), on a logarithmic scale; the lines
  at 1 (\emph{as random}) split the four areas. Named pairs: the three furthest from \emph{as random} in each area other than
  \emph{close in neither}, and the pairs close in topics whose two roots are
  linked to the same theme by links that persist at every $k$, labelled with
  that theme.}
  \label{fig:book-plane}
\end{figure}

On the plane of the pairs of roots (Fig.~\ref{fig:book-plane}), \BookPairsNeither{}
of the \BookPairs{} pairs are close in neither layer, \BookPairsBothAreas{} close
in both, \BookPairsTopOnly{} close in topics only and \BookPairsRefOnly{} close in
references only.

\subsection{Evolution}\label{sec:book-evolution}

The evolution uses the periods of the thematic evolution of the same collection
in \citet{aria2026book}, 1985--2004, 2005--2010, 2011--2015 and 2016--2025, with
the last period split at 2020 because more than three quarters of the documents
were published after 2015 (Fig.~\ref{fig:book-production}). A root is followed in
a period when it has at least five documents in it; many roots appear only in
the last two periods, so most pairs of roots are followed in two periods only.
Of the \EvoPairs{} pairs followed in at least two periods, \EvoPairsThree{} are
followed in three or more, and \EvoPairsMove{} change area of the plane.

\begin{figure}[!tbp]
  \centering
  \includegraphics[width=0.85\textwidth]{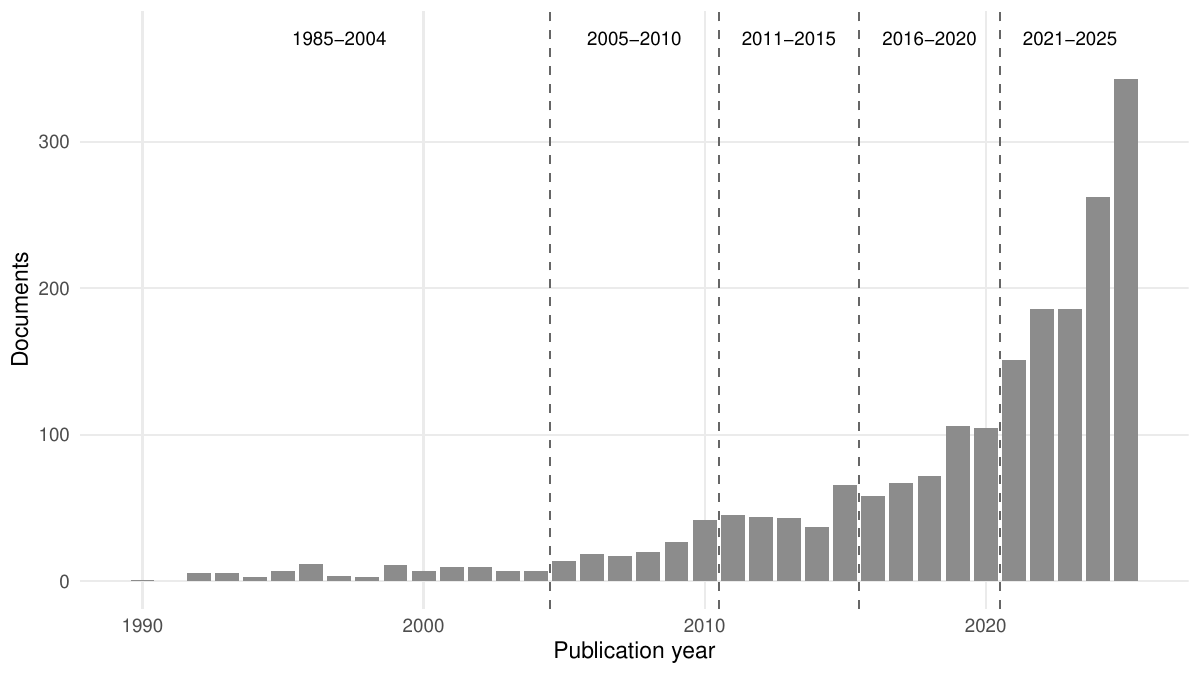}
  \caption{Annual production of the documents analysed in the management collection,
  with the periods of the evolution (dashed lines at the cutting years).}
  \label{fig:book-production}
\end{figure}

\begin{figure}[!tbp]
  \centering
  \includegraphics[width=\textwidth]{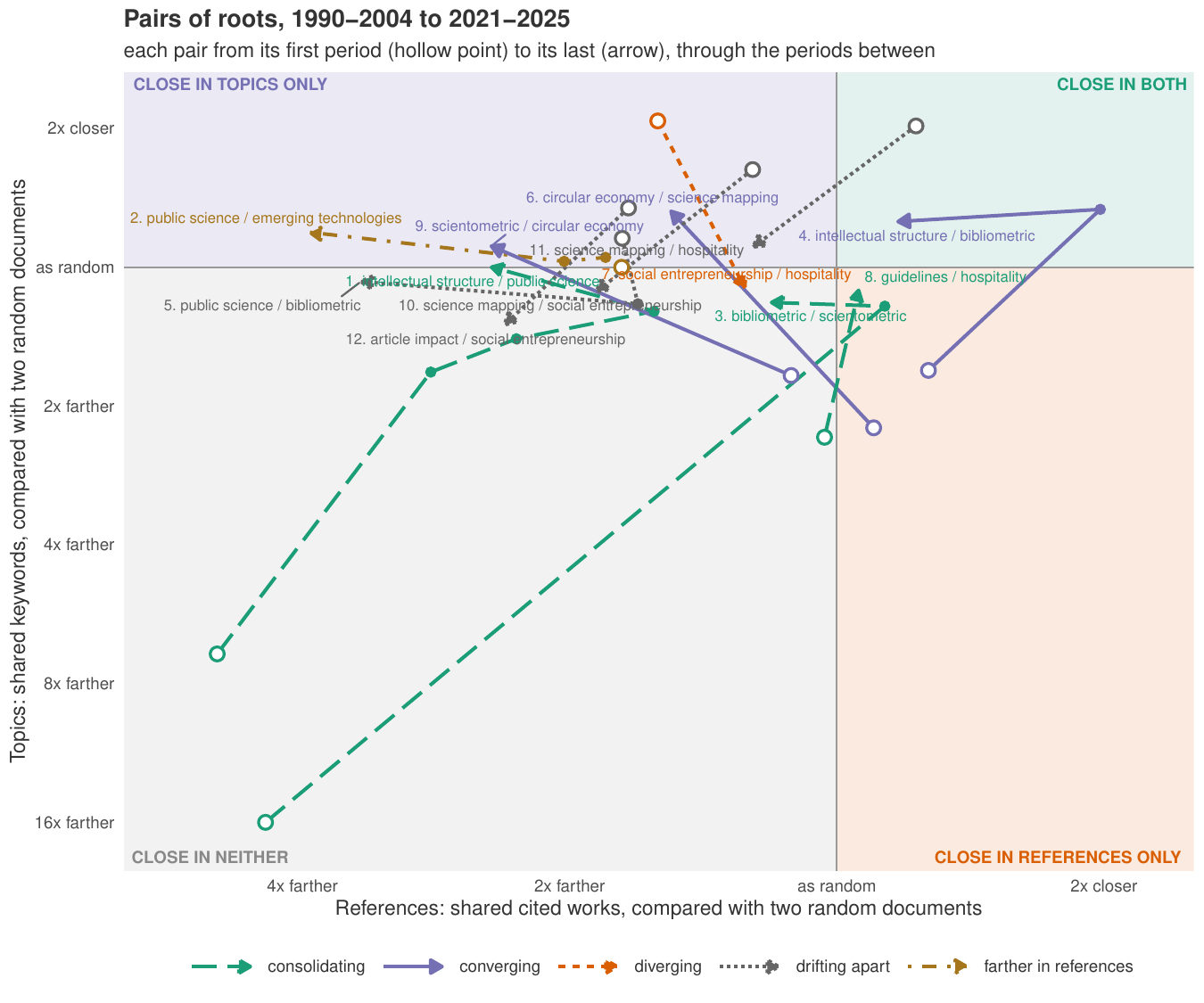}
  \caption{Trajectories of the pairs of roots that change area, from their first
  period (hollow point) to their last (arrow). The line type gives the trend of
  the pair (Section~\ref{sec:evolution}).}
  \label{fig:book-trajectory}
\end{figure}

Over these pairs, \EvoConverging{} are converging, \EvoConsolidating{}
consolidating, \EvoDriftingApart{} drifting apart and \EvoDiverging{} diverging;
\EvoFartherinReferences{} grow apart in references with stable topics,
\EvoCloserinReferences{} closer, and \EvoStable{} do not change. Trends estimated
on two periods are fragile, and the reading of Fig.~\ref{fig:book-trajectory}
privileges the pairs followed longer. The economics of public science (R4) and
the co-citation tradition of strategic management (R3), followed over all five
periods, move from close in neither layer to close in topics only. The two bases
remain distinct in what they cite and come to write about the same subjects.

\subsection{Sensitivity and stability}\label{sec:sensitivity}

Table~\ref{tab:sensitivity} varies one choice at a time on the management
collection. The number of neighbours is the parameter with the largest effect:
the adjusted Rand index with the baseline partition is \SensKFiveRoots{} for the
roots and \SensKFiveThemes{} for the themes at $k=5$, and \SensKTwentyRoots{} and
\SensKTwentyThemes{} at $k=20$. Most of this effect is a change of granularity: as
$k$ grows, clusters mostly merge rather than reshuffle (Supplementary Material,
Section~S4, which also compares alternative constructions of the graph). Restricting the
analysis to a subset of the most cited documents (i.e., 1,000 documents) changes the roots substantially
(\SensTopRoots{}), because it removes the recent documents, which are the least
cited; so does weighting references by idf (\SensIdfRoots{}). Titles and abstracts give themes that agree little
with those from keywords (\SensTiabThemes{}), which is why the source of topics is
reported with the results. Association strength makes the topic layer degenerate
(\SensAssocThemes{} themes, NMI \SensAssocNMI{}). Without keyword cleaning the
themes change too (\SensNoCleanThemes{}).
The threshold on the documents of a link and the minimum size of a cluster
change only the links and the smallest clusters.

\begin{table}[!tbp]
  \centering
  \caption{Sensitivity of the results on the management collection to one choice at a
  time. Roots (br.): roots and branching roots; Themes (conv.): themes and
  convergent themes. ARI with the baseline: adjusted Rand index between the roots,
  or themes, of the variant and those of the baseline analysis, on the documents
  clustered in both.}
  \label{tab:sensitivity}
  \small
  \resizebox{\textwidth}{!}{\begin{tabular}{@{}lrrrrrrr@{}}
\toprule
 & & & & & & \multicolumn{2}{c}{ARI with the baseline} \\
\cmidrule(lr){7-8}
Variant & Docs & Roots (br.) & Themes (conv.) & Links & NMI & Roots & Themes \\
\midrule
baseline & 2,004 & 16 (9) & 28 (12) & 36 & 0.25 & 1.00 & 1.00 \\
$k=5$ & 2,004 & 35 (16) & 35 (21) & 72 & 0.29 & 0.54 & 0.28 \\
$k=20$ & 2,004 & 15 (7) & 15 (11) & 29 & 0.27 & 0.74 & 0.39 \\
topics from TI + AB & 2,054 & 19 (11) & 26 (13) & 43 & 0.25 & 0.87 & 0.12 \\
idf weights of references & 2,004 & 21 (16) & 28 (17) & 48 & 0.28 & 0.35 & 1.00 \\
association strength & 2,004 & 14 (7) & 9 (7) & 21 & 0.14 & 0.42 & 0.12 \\
Jaccard & 2,004 & 17 (11) & 26 (12) & 40 & 0.27 & 0.74 & 0.44 \\
percentile instead of null & 2,004 & 16 (9) & 28 (12) & 36 & 0.25 & 1.00 & 1.00 \\
1,000 most cited documents & 993 & 13 (8) & 16 (10) & 26 & 0.27 & 0.44 & 0.35 \\
$n_{\min}=3$ & 2,004 & 16 (12) & 28 (13) & 41 & 0.25 & 1.00 & 1.00 \\
$n_{\min}=10$ & 2,004 & 16 (8) & 28 (9) & 31 & 0.25 & 1.00 & 1.00 \\
minimum cluster size 10 & 2,004 & 16 (9) & 23 (12) & 37 & 0.25 & 1.00 & 1.00 \\
minimum cluster size 20 & 2,004 & 14 (9) & 22 (11) & 34 & 0.25 & 1.00 & 1.00 \\
Leiden algorithm & 2,004 & 16 (9) & 22 (11) & 34 & 0.29 & 0.84 & 0.64 \\
no keyword cleaning & 2,017 & 19 (13) & 27 (13) & 44 & 0.28 & 0.89 & 0.28 \\
\bottomrule
\end{tabular}
}
\end{table}

Consensus clustering is what makes the partitions reproducible. Across five seeds
far apart, single runs of Louvain agree with each other at
\StabSingleRootsMin{}--\StabSingleRootsMax{} for the roots and
\StabSingleThemesMin{}--\StabSingleThemesMax{} for the themes (mean adjusted Rand
index, three collections; Supplementary Material, Section~S5); consensus
partitions agree at
\StabConsRootsMin{}--\StabConsRootsMax{} and \StabConsThemesMin{}--\StabConsThemesMax{}.

\section{Validation on synthetic data}\label{sec:validation}

A real collection does not say which of its roots branch or which of its themes
converge. The framework is therefore also applied to synthetic collections in
which this structure is planted.

\subsection{Design}\label{sec:sim-design}

Each synthetic collection has six groups of documents by references, $R_1$ to
$R_6$, and seven by topics, $T_1$ to $T_7$, crossed as follows: $R_1$, $R_2$
and $R_6$ write mainly on one theme each ($T_1$, $T_2$, $T_6$); $R_3$ writes on
two themes, $T_3$ and $T_4$, and is the planted branching root; $R_4$ and $R_5$
write on the same theme $T_5$, the planted convergent theme; and $T_7$ is a
diffuse theme fed by $R_1$, $R_2$ and $R_6$, which therefore branch by design
too. Every document of a root draws its references from a pool of 80 references
of that root, and every document of a theme draws its terms from a pool of 30
terms of that theme, both with Zipf popularity. A share of the references and
terms of each document, the \emph{noise}, comes instead from global pools of
2{,}000 references and 800 terms shared by all groups, again with Zipf
popularity: classics cited by everyone and generic terms used by everyone.
Documents carry on average 20 references and six terms.

The grid crosses the number of documents per group ($m=40$, 100, 250), the noise
(0.2, 0.4, 0.6) and the balance of the groups, with three replicates, for
\SimRuns{} collections of \SimMinN{} to \SimMaxN{} documents. In the
\emph{balanced} design the groups have the sizes above; in the
\emph{unbalanced} design $R_1$ is twice as large and $R_6$ smaller, the
second branch of $R_3$ is less than half the first, and $R_5$ contributes to
$T_5$ a third of what $R_4$ does, a configuration closer to real fields. All collections are analysed with the
choices of Section~\ref{sec:application}, keywords as topic layer and no
keyword cleaning.

\subsection{Recovery of roots, themes and structures}\label{sec:sim-recovery}

Table~\ref{tab:recovery} reports, for each design and noise level, the
agreement of roots and themes with the planted groups and how often the planted
structures are found. Roots are recovered almost exactly in every design (mean
adjusted Rand index of at least \SimARIRootsMin{}): shared references are a
strong signal even when most of them come from the global pool. Themes are
harder. Their agreement with the planted groups falls from \SimARIThemesHigh{}
at low noise to \SimARIThemesLowBal{} and \SimARIThemesLowUnb{} at the highest
noise, in the balanced and unbalanced designs: six terms per document leave
little room once a large share of them is generic.

\begin{table}[t]
  \centering
  \caption{Recovery of the planted structure (means over the three sizes and
  the three replicates). ARI: adjusted Rand index between the roots, or the
  themes, and the planted groups. Planted structure found: share of the runs in
  which the cluster that best matches the planted group has the planted
  structure. False structures: share of the roots planted with one theme
  ($R_4$, $R_5$) classified as branching, and of the themes planted with one
  root ($T_1$--$T_4$, $T_6$) classified as convergent.}
  \label{tab:recovery}
  \small
  \resizebox{\textwidth}{!}{\begin{tabular}{@{}llrrrrrrr@{}}
\toprule
 & & \multicolumn{2}{c}{ARI with the planted groups} & \multicolumn{3}{c}{Planted structure found} & \multicolumn{2}{c}{False structures} \\
\cmidrule(lr){3-4}\cmidrule(lr){5-7}\cmidrule(lr){8-9}
Groups & Noise & Roots & Themes & $R_3$ branching & $T_5$ convergent & $T_7$ convergent & Branching & Convergence \\
\midrule
balanced & 0.2 & 1.00 & 0.98 & 100\% & 100\% & 100\% & 0\% & 0\% \\
 & 0.4 & 1.00 & 0.90 & 100\% & 100\% & 100\% & 0\% & 0\% \\
 & 0.6 & 0.98 & 0.65 & 100\% & 100\% & 67\% & 0\% & 0\% \\
unbalanced & 0.2 & 1.00 & 0.98 & 100\% & 100\% & 67\% & 0\% & 0\% \\
 & 0.4 & 1.00 & 0.88 & 100\% & 100\% & 78\% & 0\% & 0\% \\
 & 0.6 & 0.97 & 0.54 & 100\% & 78\% & 78\% & 22\% & 0\% \\
\bottomrule
\end{tabular}
}
\end{table}

The branching root $R_3$ is found in every run. The convergent theme $T_5$ is
found in every run except at the highest noise of the unbalanced design
(\SimTFiveUnbHigh{}), where the small contribution of $R_5$ is sometimes lost
with the borders of the theme. The diffuse theme $T_7$ is the hardest structure,
found in at least \SimTSevenMin{} of the runs of each design. No theme planted
with a single root is ever classified as convergent.

False branching appears only at the highest noise of the unbalanced design,
where \SimFalseBranchRoots{} of the \SimFalseBranchOf{} roots planted with a
single theme are classified as branching. In these runs the planted theme $T_5$
splits into a main theme and a small fragment (adjusted Rand index of the themes
between \SimFalseARIThemesMin{} and \SimFalseARIThemesMax{}), and the root linked
to $T_5$ is linked to the fragment as well, by a link of \SimWeakNMin{} to
\SimWeakNMax{} documents with a residual between \SimWeakResMin{} and
\SimWeakResMax{}. False branching is therefore a consequence of an unstable
theme, not of the link rule.

Since it originates in an unstable theme, false branching should not survive a
change in the number of neighbours, and the persistence of
Section~\ref{sec:robustness} should recognise it. To test this on more cases,
the framework was applied to \ValRuns{} further collections generated with new
seeds where false branching occurs, at noise 0.5, 0.6 and 0.7 in both designs.
Among their \ValBranching{} branching roots, \ValFalse{} are false
(\ValFalseShareAll{}). At noise 0.7 a second form appears besides the
fragment: the planted theme splits into two parts that receive links of
comparable size from the same root. Among the
\ValPersistent{} branching roots that persist at every $k$, only
\ValFalsePersistent{} are false (\ValFalseSharePersistent{}), against
\ValFalseShareNotPersistent{} among those that do not persist. Requiring
persistence removes \ValFalseRemoved{} of the false branching roots at the cost
of \ValTrueLost{} of the true ones; in the runs of Table~\ref{tab:recovery}, none
of the \DevPersistent{} persistent branching roots is false. Persistence is thus
the diagnostic of false branching, and the reason why the interpretation of the management
collection (Section~\ref{sec:application}) rests on the branching roots and the
links that persist at every $k$. Two further indicators, the imbalance between
the links of a root and the similarity in topics between its themes, were
assessed against a rule fixed in advance and did not improve on persistence out
of sample (Supplementary Material, Section~S2).

\subsection{Link rules and pairs of documents}\label{sec:sim-rules}

The link rule of Section~\ref{sec:links}, with $z=2$ and $n_{\min}=5$, was
chosen among five candidates
(Table~\ref{tab:rules}). All five recover $R_3$, $T_5$ and $T_7$ equally often
and none produces false convergence; they differ in false branching and in how
many links they keep on real data. Requiring the link to reach a minimum share of both the root and the
theme halves false branching in the unbalanced design (\RuleAndFalseUnb{}
instead of \SimFalseBranchUnb{}), but on the management collection it keeps
\RuleAndLinks{} of the \RuleAdoptedLinks{} links and reduces branching roots
from \RuleAdoptedBranching{} to \RuleAndBranching{} and convergent themes from
\RuleAdoptedConvergent{} to \RuleAndConvergent{}. A threshold on the share of
the root penalises large roots, which can supply most of a theme with less than a
tenth of their own documents. The adopted rule keeps those links, and its false
branching is recognised by persistence. Lowering the minimum count to three
documents, or requiring the share of either side, increases false branching.

\begin{table}[t]
  \centering
  \caption{Candidate link rules on the simulation (unbalanced and balanced
  designs, all noise levels) and on the management collection. $T_7$ found: share of
  the unbalanced runs in which the diffuse theme is classified as convergent.
  False branching and false convergence as in Table~\ref{tab:recovery}. Management
  collection: links found by the rule on the roots $\times$ themes table of
  Section~\ref{sec:book}, and the branching roots and convergent themes they
  give; for the two-rule variant, links counted on both sides.}
  \label{tab:rules}
  \small
  \resizebox{\textwidth}{!}{\begin{tabular}{@{}lrrrrrrr@{}}
\toprule
 & \multicolumn{4}{c}{Simulation} & \multicolumn{3}{c}{Management collection} \\
\cmidrule(lr){2-5}\cmidrule(lr){6-8}
 & $T_7$ found & \multicolumn{2}{c}{False branching} & False & & Branching & Convergent \\
Link rule & (unbal.) & bal. & unbal. & convergence & Links & roots & themes \\
\midrule
$r_{ab}>2$, $n_{ab}\ge 5$ (adopted) & 74\% & 0\% & 7\% & 0\% & 36 & 9 & 12 \\
$r_{ab}>2$, $n_{ab}\ge 3$ & 74\% & 2\% & 9\% & 0\% & 41 & 12 & 13 \\
$r_{ab}>2$, 10\% of the root and of the theme & 74\% & 0\% & 4\% & 0\% & 19 & 5 & 5 \\
$r_{ab}>2$, 10\% of the root or of the theme & 74\% & 6\% & 11\% & 0\% & 42 & 12 & 13 \\
$r_{ab}>2$, 10\% of the root (roots) or of the theme (themes) & 74\% & 0\% & 4\% & 0\% & 19 & 6 & 11 \\
\bottomrule
\end{tabular}
}
\end{table}

\label{sec:sim-pairs}Against the planted groups, the situations of single pairs
(Section~\ref{sec:null}) are unreliable except for consolidation: only
\SimPrecBranchMin{} to \SimPrecBranchMax{} of the pairs labelled branching
share the root and not the theme, and only \SimPrecConvMin{} to
\SimPrecConvMax{} of those labelled convergence share the theme and not the
root. This is why the framework draws its conclusions from roots, themes and
their links.

\section{Discussion}\label{sec:discussion}

The framework rests on one premise: the references and the topics of the same
documents describe a field in two ways that do not coincide, and the points where
they disagree are worth studying. The three collections, the management collection of
Section~\ref{sec:application} and the two reported in the Supplementary Material
(Section~S5), confirm the premise: in all of them the correlation between the two similarities on random pairs is
positive but low, from \XaiRho{} to \BookRhoRandom{}, and the two layers carry largely
complementary information.

\subsection{What the second layer adds}\label{sec:disc-adds}

Taken alone, the references layer gives clusters close to those of ordinary
bibliographic coupling (Supplementary Material, Section~S6). What the framework adds is the
relation between two partitions. A hybrid clustering
\citep{janssens2008,thijs2018,boyack2020} resolves the disagreement between
references and terms inside a combined similarity, where it is no longer
visible. Comparisons of citation-based and text-based maps also report limited
overlap between them \citep{boyack2020,xie2025}: low agreement between the
layers appears to be common, and the question is what it means.

The structures of Section~\ref{sec:links} give three answers. A branching root
is an intellectual base whose documents write about different things. In the
management collection, the largest root (R1) has this form because its documents share
the methodological literature of bibliometric reviews and apply it to unrelated
subjects; a hybrid clustering would split it among the subjects, or keep it and
blur them. A convergent theme is a subject reached from different bases, as
patents and technology indicators are reached from the economics of public
science and from technology forecasting. \citet{braam1991} had observed that
different co-citation clusters may identify the same specialty; convergent themes
turn that observation into a structure, defined by a test and checked for
persistence. Finally, the plane of the pairs of roots shows bases that remain
distinct in what they cite while coming to write about the same subjects. In the
terms of \citet{persson1994}, the roots are research fronts identified through
their intellectual base, and the framework connects them to the topics of the
same documents.

\subsection{Robustness as part of the result}\label{sec:disc-robust}

Clusters found by modularity optimisation depend on chance and on choices that
the data do not fix. Chance is addressed by consensus clustering, which raises
the agreement between seeds from \StabSingleThemesMin{}--\StabSingleThemesMax{}
to \StabConsThemesMin{}--\StabConsThemesMax{} for the themes
(Section~\ref{sec:sensitivity}). The number of neighbours cannot be addressed in
the same way, because different values of $k$ give partitions of different
granularity, and the appropriate granularity can only be chosen against an
external criterion \citep{sjogarde2020}. The framework therefore treats the dependence on $k$ as
information: a link, a root or a theme that persists at $k=5$, 10 and 20 does not
depend on the granularity chosen. This is also the most effective protection
against false branching, which comes from themes that split: among the branching
roots that persist at every $k$ only \ValFalseSharePersistent{} are false,
against \ValFalseShareNotPersistent{} among the others
(Section~\ref{sec:sim-recovery}). We recommend reading the persistent structures
first and treating the others as hypotheses.
Further recommendations for use are given in the Supplementary Material.

How the association between layers is measured matters beyond this framework.
Scholarly networks are often compared after sparsification, by thresholds or by
nearest neighbours \citep[e.g.,][]{yan2012}. The
agreement between two layers should be measured on pairs selected on neither.

\subsection{Limitations}\label{sec:limitations}

First, themes are less stable than roots. With a few keywords per document the
topic layer is sparser and noisier, and single runs of the clustering agree at
\BookAgreeThemes{} for the themes of the management collection against
\BookAgreeRoots{} for the roots. The errors of the framework originate in the
themes: the false branching of the synthetic collections and the convergence that
does not persist in the disinformation collection (Supplementary Material,
Section~S5).

Second, persistence identifies the conclusions that do not depend on $k$ but
does not remove the dependence. Alternative constructions of the graphs, based
on shared neighbours, on the significance of similarities or on several values
of $k$ at once, do not remove it either (Supplementary Material, Section~S4).

Third, the topic layer depends on its source: keywords and the terms of titles
and abstracts give themes that agree little (\SensTiabThemes{}), and the
framework cannot say which is right. Keyword cleaning relies on lists prepared by the
analyst.

Fourth, the framework requires cited references in a usable form. Scopus needs
reference matching \citep{aria2026c}. In OpenAlex the coverage of references is
comparable to that of Web of Science and Scopus on recent publications indexed
by all three \citep{culbert2025}, but references are missing for many records,
more often for lower-impact journals and less prestigious institutions
\citep{zhou2026}, and for publications not indexed in the other two sources,
as shown for African research \citep{alonso2025}. Only
\DisAnalysed{} of the \DisDocs{} documents of the disinformation collection
entered the analysis, and the roots describe the documents whose references are
recorded. Sources that do not export cited references cannot be analysed.

Fifth, roots are defined once on the whole collection. This makes them comparable
across periods, but a root that splits over time remains one root, and in growing
fields most pairs of roots are followed in two periods only (\EvoPairsThree{} of
\EvoPairs{} in the management collection are followed in three or more).

Sixth, the names of the roots are an aid to interpretation, not a result. They
rest on the strong references whose title can be found in the collection, in the
reference strings or in OpenAlex, and on their sources when fewer than five
titles are found. As with any algorithmically constructed topical structure,
whose meaning has to be checked against the substance of the field
\citep{held2022velden}, the analyst should read the strong references of a root
before interpreting it.

Seventh, the synthetic collections plant a few structures with fixed pools of
references and terms and no change over time, so the evolution of pairs of roots
has been assessed on real data only.

\section{Conclusions}\label{sec:conclusions}

This paper has introduced multiplex coupling, a framework that couples the same
documents by their references and by their topics, keeps the two layers separate
and studies how they relate. The agreement between the
layers (RQ1) is measured on random pairs of documents and tested by permutation. In the three
collections analysed it is low, and this is the main rationale for a multilayer
analysis: the two layers carry complementary information, and reading them
together enriches what coupling alone tells about a field. Roots that have grown into several
themes and themes reached from several roots (RQ2) are identified by a single
rule on the roots $\times$ themes table; on synthetic data the planted structures
are recovered, and the false branching that unstable themes produce is
recognised by its lack of persistence across numbers of neighbours. The relations
between roots (RQ3) are placed on the plane of their lifts in references and in
topics and followed over periods, which shows pairs of bases that remain distinct
in what they cite while converging on the same subjects.

\paragraph{Future developments}
Three developments follow from the limitations of Section~\ref{sec:limitations}.
The first is to add further layers on the same documents, such as the cited
sources or direct citation, which would turn the comparison of two partitions
into the comparison of several. The second is to let roots change over
time, finding them in each period and connecting them by measures of
lineage between periods, so that a root that forms or splits
can be observed. The third is to strengthen the topic layer, where most of the
instability lies. The similarity between documents in this layer could be
measured on embeddings of their titles, abstracts, keywords or full texts,
which represent content densely rather than through a few shared terms: two
documents that share no keyword could still be found close, and themes should
become as stable as roots are. Embeddings learnt from text alone would be needed,
since models trained on citation links would bring the references back into the
topic layer.

\section*{Author Contributions}
Massimo Aria: Conceptualization, Methodology, Software, Data curation,
Writing---original draft, Writing---review \& editing. Corrado Cuccurullo:
Conceptualization, Writing---review \& editing. Maria Spano: Formal analysis,
Validation, Writing---review \& editing.

\section*{Competing Interests}
The authors have no competing interests.

\section*{Funding Information}
This research received no external funding.

\section*{Use of Generative AI}
The authors used generative AI to assist in editing the text of
the manuscript, under their direction and on the basis of the method, analyses and
interpretations developed by the authors. All results were produced by the
authors' code and data, and the authors reviewed, revised and take full
responsibility for the content of the paper.

\section*{Data Availability}
The management collection is distributed with \textit{bibliometrix} through
\texttt{loadBookDataset()}. The OpenAlex collections, on disinformation and the
larger one used for the computing times, were downloaded with
\textit{openalexR} \citep{aria2024oa} and are deposited in \textit{bibliometrix}
format on Zenodo \citep{aria2026data}, together with the keyword cleaning lists and
the scripts that produce every number, table and figure of the paper. Records
exported from Scopus cannot be redistributed: for the collection on explainable
artificial intelligence the deposit contains the query and the Scopus
identifiers (EID) and DOIs of the documents, from which the collection can be
rebuilt by anyone with access to Scopus. The framework is implemented in
the development version of \textit{bibliometrix} on GitHub (5.5.0.9000, branch
\texttt{feature/multiplex}).

\bibliographystyle{apacite-qss}
\bibliography{refs}

\begin{thebibliography}{}

\bibitem [\protect \citeauthoryear {%
Ahlgren%
\ \protect \BOthers {.}}{%
Ahlgren%
\ \protect \BOthers {.}}{%
{\protect \APACyear {2020}}%
}]{%
ahlgren2020}
\APACinsertmetastar {%
ahlgren2020}%
\begin{APACrefauthors}%
Ahlgren, P.%
, Chen, Y.%
, Colliander, C.%
\BCBL {}\ \BBA {} van Eck, N\BPBI J.%
\end{APACrefauthors}%
\unskip\
\newblock
\APACrefYearMonthDay{2020}{}{}.
\newblock
{\BBOQ}\APACrefatitle {Enhancing direct citations: A comparison of relatedness
  measures for community detection in a large set of {PubMed} publications}
  {Enhancing direct citations: A comparison of relatedness measures for
  community detection in a large set of {PubMed} publications}.{\BBCQ}
\newblock
\APACjournalVolNumPages{Quantitative Science Studies}{1}{2}{714--729}.
\newblock
\begin{APACrefDOI} \doi{10.1162/qss_a_00027} \end{APACrefDOI}
\PrintBackRefs{\CurrentBib}

\bibitem [\protect \citeauthoryear {%
Ahlgren%
\ \BBA {} Colliander%
}{%
Ahlgren%
\ \BBA {} Colliander%
}{%
{\protect \APACyear {2009}}%
}]{%
ahlgren2009}
\APACinsertmetastar {%
ahlgren2009}%
\begin{APACrefauthors}%
Ahlgren, P.%
\BCBT {}\ \BBA {} Colliander, C.%
\end{APACrefauthors}%
\unskip\
\newblock
\APACrefYearMonthDay{2009}{}{}.
\newblock
{\BBOQ}\APACrefatitle {Document–document similarity approaches and science
  mapping: Experimental comparison of five approaches} {Document–document
  similarity approaches and science mapping: Experimental comparison of five
  approaches}.{\BBCQ}
\newblock
\APACjournalVolNumPages{Journal of Informetrics}{3}{1}{49--63}.
\newblock
\begin{APACrefDOI} \doi{10.1016/j.joi.2008.11.003} \end{APACrefDOI}
\PrintBackRefs{\CurrentBib}

\bibitem [\protect \citeauthoryear {%
Alonso-Álvarez%
\ \BBA {} van Eck%
}{%
Alonso-Álvarez%
\ \BBA {} van Eck%
}{%
{\protect \APACyear {2025}}%
}]{%
alonso2025}
\APACinsertmetastar {%
alonso2025}%
\begin{APACrefauthors}%
Alonso-Álvarez, P.%
\BCBT {}\ \BBA {} van Eck, N\BPBI J.%
\end{APACrefauthors}%
\unskip\
\newblock
\APACrefYearMonthDay{2025}{}{}.
\newblock
{\BBOQ}\APACrefatitle {Coverage and metadata completeness and accuracy of
  {African} research publications in {OpenAlex}: A comparative analysis}
  {Coverage and metadata completeness and accuracy of {African} research
  publications in {OpenAlex}: A comparative analysis}.{\BBCQ}
\newblock
\APACjournalVolNumPages{Quantitative Science Studies}{6}{}{1336--1357}.
\newblock
\begin{APACrefDOI} \doi{10.1162/qss.a.396} \end{APACrefDOI}
\PrintBackRefs{\CurrentBib}

\bibitem [\protect \citeauthoryear {%
Aria%
\ \BBA {} Cuccurullo%
}{%
Aria%
\ \BBA {} Cuccurullo%
}{%
{\protect \APACyear {2017}}%
}]{%
aria2017}
\APACinsertmetastar {%
aria2017}%
\begin{APACrefauthors}%
Aria, M.%
\BCBT {}\ \BBA {} Cuccurullo, C.%
\end{APACrefauthors}%
\unskip\
\newblock
\APACrefYearMonthDay{2017}{}{}.
\newblock
{\BBOQ}\APACrefatitle {bibliometrix: An {R}-tool for comprehensive science
  mapping analysis} {bibliometrix: An {R}-tool for comprehensive science
  mapping analysis}.{\BBCQ}
\newblock
\APACjournalVolNumPages{Journal of Informetrics}{11}{4}{959--975}.
\newblock
\begin{APACrefDOI} \doi{10.1016/j.joi.2017.08.007} \end{APACrefDOI}
\PrintBackRefs{\CurrentBib}

\bibitem [\protect \citeauthoryear {%
Aria%
\ \BBA {} Cuccurullo%
}{%
Aria%
\ \BBA {} Cuccurullo%
}{%
{\protect \APACyear {2026}}%
}]{%
aria2026book}
\APACinsertmetastar {%
aria2026book}%
\begin{APACrefauthors}%
Aria, M.%
\BCBT {}\ \BBA {} Cuccurullo, C.%
\end{APACrefauthors}%
\unskip\
\newblock
\APACrefYear{2026}.
\newblock
\APACrefbtitle {Science Mapping Analysis: A Primer with {Biblioshiny}} {Science
  mapping analysis: A primer with {Biblioshiny}}.
\newblock
\APACaddressPublisher{}{McGraw-Hill}.
\PrintBackRefs{\CurrentBib}

\bibitem [\protect \citeauthoryear {%
Aria%
, Cuccurullo%
\BCBL {}\ \protect \BOthers {.}}{%
Aria%
, Cuccurullo%
\BCBL {}\ \protect \BOthers {.}}{%
{\protect \APACyear {2024}}%
}]{%
aria2024comp}
\APACinsertmetastar {%
aria2024comp}%
\begin{APACrefauthors}%
Aria, M.%
, Cuccurullo, C.%
, D’Aniello, L.%
, Misuraca, M.%
\BCBL {}\ \BBA {} Spano, M.%
\end{APACrefauthors}%
\unskip\
\newblock
\APACrefYearMonthDay{2024}{}{}.
\newblock
{\BBOQ}\APACrefatitle {Comparative science mapping: A novel conceptual
  structure analysis with metadata} {Comparative science mapping: A novel
  conceptual structure analysis with metadata}.{\BBCQ}
\newblock
\APACjournalVolNumPages{Scientometrics}{129}{11}{7055--7081}.
\newblock
\begin{APACrefDOI} \doi{10.1007/s11192-024-05161-6} \end{APACrefDOI}
\PrintBackRefs{\CurrentBib}

\bibitem [\protect \citeauthoryear {%
Aria%
, Cuccurullo%
\BCBL {}\ \BBA {} Spano%
}{%
Aria%
, Cuccurullo%
\BCBL {}\ \BBA {} Spano%
}{%
{\protect \APACyear {2026}}%
}]{%
aria2026data}
\APACinsertmetastar {%
aria2026data}%
\begin{APACrefauthors}%
Aria, M.%
, Cuccurullo, C.%
\BCBL {}\ \BBA {} Spano, M.%
\end{APACrefauthors}%
\unskip\
\newblock
\APACrefYearMonthDay{2026}{}{}.
\newblock
\APACrefbtitle {Replication material for ``{Multiplex} coupling: {Connecting}
  the intellectual base and the topics of a research field''
  {\upshape(\MakeUppercase{v}ersion 2)}} {Replication material for
  ``{Multiplex} coupling: {Connecting} the intellectual base and the topics of
  a research field'' {\upshape(\MakeUppercase{v}ersion 2)}}\ [Data set].
\newblock
\APAChowpublished {Zenodo}.
\newblock
\begin{APACrefDOI} \doi{10.5281/zenodo.23392228} \end{APACrefDOI}
\PrintBackRefs{\CurrentBib}

\bibitem [\protect \citeauthoryear {%
Aria%
, D’Aniello%
, Misuraca%
\BCBL {}\ \BBA {} Spano%
}{%
Aria%
, D’Aniello%
, Misuraca%
\BCBL {}\ \BBA {} Spano%
}{%
{\protect \APACyear {2026}}%
}]{%
aria2026b}
\APACinsertmetastar {%
aria2026b}%
\begin{APACrefauthors}%
Aria, M.%
, D’Aniello, L.%
, Misuraca, M.%
\BCBL {}\ \BBA {} Spano, M.%
\end{APACrefauthors}%
\unskip\
\newblock
\APACrefYearMonthDay{2026}{}{}.
\newblock
{\BBOQ}\APACrefatitle {Rethinking thematic evolution in science mapping: An
  integrated framework for longitudinal analysis} {Rethinking thematic
  evolution in science mapping: An integrated framework for longitudinal
  analysis}.{\BBCQ}
\newblock
\APACjournalVolNumPages{Journal of Informetrics}{20}{4}{Article 101877}.
\newblock
\begin{APACrefDOI} \doi{10.1016/j.joi.2026.101877} \end{APACrefDOI}
\PrintBackRefs{\CurrentBib}

\bibitem [\protect \citeauthoryear {%
Aria%
, D’Aniello%
\BCBL {}\ \BBA {} Spano%
}{%
Aria%
, D’Aniello%
\BCBL {}\ \BBA {} Spano%
}{%
{\protect \APACyear {2026}}%
}]{%
aria2026c}
\APACinsertmetastar {%
aria2026c}%
\begin{APACrefauthors}%
Aria, M.%
, D’Aniello, L.%
\BCBL {}\ \BBA {} Spano, M.%
\end{APACrefauthors}%
\unskip\
\newblock
\APACrefYearMonthDay{2026}{}{}.
\newblock
{\BBOQ}\APACrefatitle {A multi-phase reference matching algorithm for
  bibliometric analysis: Design, implementation, and evaluation} {A multi-phase
  reference matching algorithm for bibliometric analysis: Design,
  implementation, and evaluation}.{\BBCQ}
\newblock
\APACjournalVolNumPages{Scientometrics}{131}{9}{6693--6728}.
\newblock
\begin{APACrefDOI} \doi{10.1007/s11192-026-05763-2} \end{APACrefDOI}
\PrintBackRefs{\CurrentBib}

\bibitem [\protect \citeauthoryear {%
Aria%
, Le%
\BCBL {}\ \protect \BOthers {.}}{%
Aria%
, Le%
\BCBL {}\ \protect \BOthers {.}}{%
{\protect \APACyear {2024}}%
}]{%
aria2024oa}
\APACinsertmetastar {%
aria2024oa}%
\begin{APACrefauthors}%
Aria, M.%
, Le, T.%
, Cuccurullo, C.%
, Belfiore, A.%
\BCBL {}\ \BBA {} Choe, J.%
\end{APACrefauthors}%
\unskip\
\newblock
\APACrefYearMonthDay{2024}{}{}.
\newblock
{\BBOQ}\APACrefatitle {{openalexR}: An {R}-Tool for Collecting Bibliometric
  Data from {OpenAlex}} {{openalexR}: An {R}-tool for collecting bibliometric
  data from {OpenAlex}}.{\BBCQ}
\newblock
\APACjournalVolNumPages{The R Journal}{15}{4}{167--180}.
\newblock
\begin{APACrefDOI} \doi{10.32614/rj-2023-089} \end{APACrefDOI}
\PrintBackRefs{\CurrentBib}

\bibitem [\protect \citeauthoryear {%
Bascur%
\ \protect \BOthers {.}}{%
Bascur%
\ \protect \BOthers {.}}{%
{\protect \APACyear {2025}}%
}]{%
bascur2025}
\APACinsertmetastar {%
bascur2025}%
\begin{APACrefauthors}%
Bascur, J\BPBI P.%
, Verberne, S.%
, van Eck, N\BPBI J.%
\BCBL {}\ \BBA {} Waltman, L.%
\end{APACrefauthors}%
\unskip\
\newblock
\APACrefYearMonthDay{2025}{}{}.
\newblock
{\BBOQ}\APACrefatitle {Which topics are best represented by science maps? {An}
  analysis of clustering effectiveness for citation and text similarity
  networks} {Which topics are best represented by science maps? {An} analysis
  of clustering effectiveness for citation and text similarity
  networks}.{\BBCQ}
\newblock
\APACjournalVolNumPages{Scientometrics}{130}{2}{1181--1199}.
\newblock
\begin{APACrefDOI} \doi{10.1007/s11192-024-05218-6} \end{APACrefDOI}
\PrintBackRefs{\CurrentBib}

\bibitem [\protect \citeauthoryear {%
Battiston%
, Nicosia%
\BCBL {}\ \BBA {} Latora%
}{%
Battiston%
, Nicosia%
\BCBL {}\ \BBA {} Latora%
}{%
{\protect \APACyear {2014}}%
}]{%
battiston2014}
\APACinsertmetastar {%
battiston2014}%
\begin{APACrefauthors}%
Battiston, F.%
, Nicosia, V.%
\BCBL {}\ \BBA {} Latora, V.%
\end{APACrefauthors}%
\unskip\
\newblock
\APACrefYearMonthDay{2014}{}{}.
\newblock
{\BBOQ}\APACrefatitle {Structural measures for multiplex networks} {Structural
  measures for multiplex networks}.{\BBCQ}
\newblock
\APACjournalVolNumPages{Physical Review E}{89}{3}{Article 032804}.
\newblock
\begin{APACrefDOI} \doi{10.1103/physreve.89.032804} \end{APACrefDOI}
\PrintBackRefs{\CurrentBib}

\bibitem [\protect \citeauthoryear {%
Blondel%
\ \protect \BOthers {.}}{%
Blondel%
\ \protect \BOthers {.}}{%
{\protect \APACyear {2008}}%
}]{%
blondel2008}
\APACinsertmetastar {%
blondel2008}%
\begin{APACrefauthors}%
Blondel, V\BPBI D.%
, Guillaume, J\BHBI L.%
, Lambiotte, R.%
\BCBL {}\ \BBA {} Lefebvre, E.%
\end{APACrefauthors}%
\unskip\
\newblock
\APACrefYearMonthDay{2008}{}{}.
\newblock
{\BBOQ}\APACrefatitle {Fast unfolding of communities in large networks} {Fast
  unfolding of communities in large networks}.{\BBCQ}
\newblock
\APACjournalVolNumPages{Journal of Statistical Mechanics: Theory and
  Experiment}{2008}{10}{P10008}.
\newblock
\begin{APACrefDOI} \doi{10.1088/1742-5468/2008/10/p10008} \end{APACrefDOI}
\PrintBackRefs{\CurrentBib}

\bibitem [\protect \citeauthoryear {%
Boyack%
\ \BBA {} Klavans%
}{%
Boyack%
\ \BBA {} Klavans%
}{%
{\protect \APACyear {2010}}%
}]{%
boyack2010}
\APACinsertmetastar {%
boyack2010}%
\begin{APACrefauthors}%
Boyack, K\BPBI W.%
\BCBT {}\ \BBA {} Klavans, R.%
\end{APACrefauthors}%
\unskip\
\newblock
\APACrefYearMonthDay{2010}{}{}.
\newblock
{\BBOQ}\APACrefatitle {Co‐citation analysis, bibliographic coupling, and
  direct citation: Which citation approach represents the research front most
  accurately?} {Co‐citation analysis, bibliographic coupling, and direct
  citation: Which citation approach represents the research front most
  accurately?}{\BBCQ}
\newblock
\APACjournalVolNumPages{Journal of the American Society for Information Science
  and Technology}{61}{12}{2389--2404}.
\newblock
\begin{APACrefDOI} \doi{10.1002/asi.21419} \end{APACrefDOI}
\PrintBackRefs{\CurrentBib}

\bibitem [\protect \citeauthoryear {%
Boyack%
\ \BBA {} Klavans%
}{%
Boyack%
\ \BBA {} Klavans%
}{%
{\protect \APACyear {2020}}%
}]{%
boyack2020}
\APACinsertmetastar {%
boyack2020}%
\begin{APACrefauthors}%
Boyack, K\BPBI W.%
\BCBT {}\ \BBA {} Klavans, R.%
\end{APACrefauthors}%
\unskip\
\newblock
\APACrefYearMonthDay{2020}{}{}.
\newblock
{\BBOQ}\APACrefatitle {A comparison of large-scale science models based on
  textual, direct citation and hybrid relatedness} {A comparison of large-scale
  science models based on textual, direct citation and hybrid
  relatedness}.{\BBCQ}
\newblock
\APACjournalVolNumPages{Quantitative Science Studies}{1}{4}{1570--1585}.
\newblock
\begin{APACrefDOI} \doi{10.1162/qss_a_00085} \end{APACrefDOI}
\PrintBackRefs{\CurrentBib}

\bibitem [\protect \citeauthoryear {%
Braam%
, Moed%
\BCBL {}\ \BBA {} van Raan%
}{%
Braam%
, Moed%
\BCBL {}\ \BBA {} van Raan%
}{%
{\protect \APACyear {1991}}%
}]{%
braam1991}
\APACinsertmetastar {%
braam1991}%
\begin{APACrefauthors}%
Braam, R\BPBI R.%
, Moed, H\BPBI F.%
\BCBL {}\ \BBA {} van Raan, A\BPBI F\BPBI J.%
\end{APACrefauthors}%
\unskip\
\newblock
\APACrefYearMonthDay{1991}{}{}.
\newblock
{\BBOQ}\APACrefatitle {Mapping of science by combined co-citation and word
  analysis. {I}. {S}tructural aspects} {Mapping of science by combined
  co-citation and word analysis. {I}. {S}tructural aspects}.{\BBCQ}
\newblock
\APACjournalVolNumPages{Journal of the American Society for Information
  Science}{42}{4}{233--251}.
\newblock
\begin{APACrefDOI}
  \doi{10.1002/(sici)1097-4571(199105)42:4<233::aid-asi1>3.0.co;2-i}
  \end{APACrefDOI}
\PrintBackRefs{\CurrentBib}

\bibitem [\protect \citeauthoryear {%
Börner%
, Chen%
\BCBL {}\ \BBA {} Boyack%
}{%
Börner%
, Chen%
\BCBL {}\ \BBA {} Boyack%
}{%
{\protect \APACyear {2003}}%
}]{%
borner2003}
\APACinsertmetastar {%
borner2003}%
\begin{APACrefauthors}%
Börner, K.%
, Chen, C.%
\BCBL {}\ \BBA {} Boyack, K\BPBI W.%
\end{APACrefauthors}%
\unskip\
\newblock
\APACrefYearMonthDay{2003}{}{}.
\newblock
{\BBOQ}\APACrefatitle {Visualizing knowledge domains} {Visualizing knowledge
  domains}.{\BBCQ}
\newblock
\APACjournalVolNumPages{Annual Review of Information Science and
  Technology}{37}{1}{179--255}.
\newblock
\begin{APACrefDOI} \doi{10.1002/aris.1440370106} \end{APACrefDOI}
\PrintBackRefs{\CurrentBib}

\bibitem [\protect \citeauthoryear {%
Callon%
, Courtial%
\BCBL {}\ \BBA {} Laville%
}{%
Callon%
, Courtial%
\BCBL {}\ \BBA {} Laville%
}{%
{\protect \APACyear {1991}}%
}]{%
callon1991}
\APACinsertmetastar {%
callon1991}%
\begin{APACrefauthors}%
Callon, M.%
, Courtial, J\BHBI P.%
\BCBL {}\ \BBA {} Laville, F.%
\end{APACrefauthors}%
\unskip\
\newblock
\APACrefYearMonthDay{1991}{}{}.
\newblock
{\BBOQ}\APACrefatitle {Co-word analysis as a tool for describing the network of
  interactions between basic and technological research: The case of polymer
  chemsitry} {Co-word analysis as a tool for describing the network of
  interactions between basic and technological research: The case of polymer
  chemsitry}.{\BBCQ}
\newblock
\APACjournalVolNumPages{Scientometrics}{22}{1}{155--205}.
\newblock
\begin{APACrefDOI} \doi{10.1007/bf02019280} \end{APACrefDOI}
\PrintBackRefs{\CurrentBib}

\bibitem [\protect \citeauthoryear {%
Callon%
\ \protect \BOthers {.}}{%
Callon%
\ \protect \BOthers {.}}{%
{\protect \APACyear {1983}}%
}]{%
callon1983}
\APACinsertmetastar {%
callon1983}%
\begin{APACrefauthors}%
Callon, M.%
, Courtial, J\BHBI P.%
, Turner, W\BPBI A.%
\BCBL {}\ \BBA {} Bauin, S.%
\end{APACrefauthors}%
\unskip\
\newblock
\APACrefYearMonthDay{1983}{}{}.
\newblock
{\BBOQ}\APACrefatitle {From translations to problematic networks: An
  introduction to co-word analysis} {From translations to problematic networks:
  An introduction to co-word analysis}.{\BBCQ}
\newblock
\APACjournalVolNumPages{Social Science Information}{22}{2}{191--235}.
\newblock
\begin{APACrefDOI} \doi{10.1177/053901883022002003} \end{APACrefDOI}
\PrintBackRefs{\CurrentBib}

\bibitem [\protect \citeauthoryear {%
Cobo%
\ \protect \BOthers {.}}{%
Cobo%
\ \protect \BOthers {.}}{%
{\protect \APACyear {2011}}%
}]{%
cobo2011}
\APACinsertmetastar {%
cobo2011}%
\begin{APACrefauthors}%
Cobo, M\BPBI J.%
, López-Herrera, A\BPBI G.%
, Herrera-Viedma, E.%
\BCBL {}\ \BBA {} Herrera, F.%
\end{APACrefauthors}%
\unskip\
\newblock
\APACrefYearMonthDay{2011}{}{}.
\newblock
{\BBOQ}\APACrefatitle {Science mapping software tools: Review, analysis, and
  cooperative study among tools} {Science mapping software tools: Review,
  analysis, and cooperative study among tools}.{\BBCQ}
\newblock
\APACjournalVolNumPages{Journal of the American Society for Information Science
  and Technology}{62}{7}{1382--1402}.
\newblock
\begin{APACrefDOI} \doi{10.1002/asi.21525} \end{APACrefDOI}
\PrintBackRefs{\CurrentBib}

\bibitem [\protect \citeauthoryear {%
Culbert%
\ \protect \BOthers {.}}{%
Culbert%
\ \protect \BOthers {.}}{%
{\protect \APACyear {2025}}%
}]{%
culbert2025}
\APACinsertmetastar {%
culbert2025}%
\begin{APACrefauthors}%
Culbert, J\BPBI H.%
, Hobert, A.%
, Jahn, N.%
, Haupka, N.%
, Schmidt, M.%
, Donner, P.%
\BCBL {}\ \BBA {} Mayr, P.%
\end{APACrefauthors}%
\unskip\
\newblock
\APACrefYearMonthDay{2025}{}{}.
\newblock
{\BBOQ}\APACrefatitle {Reference coverage analysis of {OpenAlex} compared to
  {Web of Science} and {Scopus}} {Reference coverage analysis of {OpenAlex}
  compared to {Web of Science} and {Scopus}}.{\BBCQ}
\newblock
\APACjournalVolNumPages{Scientometrics}{130}{4}{2475--2492}.
\newblock
\begin{APACrefDOI} \doi{10.1007/s11192-025-05293-3} \end{APACrefDOI}
\PrintBackRefs{\CurrentBib}

\bibitem [\protect \citeauthoryear {%
Danon%
\ \protect \BOthers {.}}{%
Danon%
\ \protect \BOthers {.}}{%
{\protect \APACyear {2005}}%
}]{%
danon2005}
\APACinsertmetastar {%
danon2005}%
\begin{APACrefauthors}%
Danon, L.%
, Díaz-Guilera, A.%
, Duch, J.%
\BCBL {}\ \BBA {} Arenas, A.%
\end{APACrefauthors}%
\unskip\
\newblock
\APACrefYearMonthDay{2005}{}{}.
\newblock
{\BBOQ}\APACrefatitle {Comparing community structure identification} {Comparing
  community structure identification}.{\BBCQ}
\newblock
\APACjournalVolNumPages{Journal of Statistical Mechanics: Theory and
  Experiment}{2005}{09}{P09008}.
\newblock
\begin{APACrefDOI} \doi{10.1088/1742-5468/2005/09/p09008} \end{APACrefDOI}
\PrintBackRefs{\CurrentBib}

\bibitem [\protect \citeauthoryear {%
De~Domenico%
\ \protect \BOthers {.}}{%
De~Domenico%
\ \protect \BOthers {.}}{%
{\protect \APACyear {2015}}%
}]{%
dedomenico2015}
\APACinsertmetastar {%
dedomenico2015}%
\begin{APACrefauthors}%
De~Domenico, M.%
, Nicosia, V.%
, Arenas, A.%
\BCBL {}\ \BBA {} Latora, V.%
\end{APACrefauthors}%
\unskip\
\newblock
\APACrefYearMonthDay{2015}{}{}.
\newblock
{\BBOQ}\APACrefatitle {Structural reducibility of multilayer networks}
  {Structural reducibility of multilayer networks}.{\BBCQ}
\newblock
\APACjournalVolNumPages{Nature Communications}{6}{1}{Article 6864}.
\newblock
\begin{APACrefDOI} \doi{10.1038/ncomms7864} \end{APACrefDOI}
\PrintBackRefs{\CurrentBib}

\bibitem [\protect \citeauthoryear {%
Geison%
}{%
Geison%
}{%
{\protect \APACyear {1981}}%
}]{%
geison1981}
\APACinsertmetastar {%
geison1981}%
\begin{APACrefauthors}%
Geison, G\BPBI L.%
\end{APACrefauthors}%
\unskip\
\newblock
\APACrefYearMonthDay{1981}{}{}.
\newblock
{\BBOQ}\APACrefatitle {Scientific Change, Emerging Specialties, and Research
  Schools} {Scientific change, emerging specialties, and research
  schools}.{\BBCQ}
\newblock
\APACjournalVolNumPages{History of Science}{19}{1}{20--40}.
\newblock
\begin{APACrefDOI} \doi{10.1177/007327538101900103} \end{APACrefDOI}
\PrintBackRefs{\CurrentBib}

\bibitem [\protect \citeauthoryear {%
Glänzel%
\ \BBA {} Thijs%
}{%
Glänzel%
\ \BBA {} Thijs%
}{%
{\protect \APACyear {2012}}%
}]{%
glanzel2012}
\APACinsertmetastar {%
glanzel2012}%
\begin{APACrefauthors}%
Glänzel, W.%
\BCBT {}\ \BBA {} Thijs, B.%
\end{APACrefauthors}%
\unskip\
\newblock
\APACrefYearMonthDay{2012}{}{}.
\newblock
{\BBOQ}\APACrefatitle {Using ‘core documents’ for detecting and labelling
  new emerging topics} {Using ‘core documents’ for detecting and labelling
  new emerging topics}.{\BBCQ}
\newblock
\APACjournalVolNumPages{Scientometrics}{91}{2}{399--416}.
\newblock
\begin{APACrefDOI} \doi{10.1007/s11192-011-0591-7} \end{APACrefDOI}
\PrintBackRefs{\CurrentBib}

\bibitem [\protect \citeauthoryear {%
Haberman%
}{%
Haberman%
}{%
{\protect \APACyear {1973}}%
}]{%
haberman1973}
\APACinsertmetastar {%
haberman1973}%
\begin{APACrefauthors}%
Haberman, S\BPBI J.%
\end{APACrefauthors}%
\unskip\
\newblock
\APACrefYearMonthDay{1973}{}{}.
\newblock
{\BBOQ}\APACrefatitle {The Analysis of Residuals in Cross-Classified Tables}
  {The analysis of residuals in cross-classified tables}.{\BBCQ}
\newblock
\APACjournalVolNumPages{Biometrics}{29}{1}{205--220}.
\newblock
\begin{APACrefDOI} \doi{10.2307/2529686} \end{APACrefDOI}
\PrintBackRefs{\CurrentBib}

\bibitem [\protect \citeauthoryear {%
Havemann%
}{%
Havemann%
}{%
{\protect \APACyear {2021}}%
}]{%
havemann2021}
\APACinsertmetastar {%
havemann2021}%
\begin{APACrefauthors}%
Havemann, F.%
\end{APACrefauthors}%
\unskip\
\newblock
\APACrefYearMonthDay{2021}{}{}.
\newblock
{\BBOQ}\APACrefatitle {Topics as clusters of citation links to highly cited
  sources: The case of research on international relations} {Topics as clusters
  of citation links to highly cited sources: The case of research on
  international relations}.{\BBCQ}
\newblock
\APACjournalVolNumPages{Quantitative Science Studies}{2}{1}{204--223}.
\newblock
\begin{APACrefDOI} \doi{10.1162/qss_a_00108} \end{APACrefDOI}
\PrintBackRefs{\CurrentBib}

\bibitem [\protect \citeauthoryear {%
Held%
}{%
Held%
}{%
{\protect \APACyear {2022}}%
}]{%
held2022}
\APACinsertmetastar {%
held2022}%
\begin{APACrefauthors}%
Held, M.%
\end{APACrefauthors}%
\unskip\
\newblock
\APACrefYearMonthDay{2022}{}{}.
\newblock
{\BBOQ}\APACrefatitle {Know thy tools! {Limits} of popular algorithms used for
  topic reconstruction} {Know thy tools! {Limits} of popular algorithms used
  for topic reconstruction}.{\BBCQ}
\newblock
\APACjournalVolNumPages{Quantitative Science Studies}{3}{4}{1054--1078}.
\newblock
\begin{APACrefDOI} \doi{10.1162/qss_a_00217} \end{APACrefDOI}
\PrintBackRefs{\CurrentBib}

\bibitem [\protect \citeauthoryear {%
Held%
\ \BBA {} Velden%
}{%
Held%
\ \BBA {} Velden%
}{%
{\protect \APACyear {2022}}%
}]{%
held2022velden}
\APACinsertmetastar {%
held2022velden}%
\begin{APACrefauthors}%
Held, M.%
\BCBT {}\ \BBA {} Velden, T.%
\end{APACrefauthors}%
\unskip\
\newblock
\APACrefYearMonthDay{2022}{}{}.
\newblock
{\BBOQ}\APACrefatitle {How to interpret algorithmically constructed topical
  structures of scientific fields? {A} case study of citation-based mappings of
  the research specialty of invasion biology} {How to interpret algorithmically
  constructed topical structures of scientific fields? {A} case study of
  citation-based mappings of the research specialty of invasion
  biology}.{\BBCQ}
\newblock
\APACjournalVolNumPages{Quantitative Science Studies}{3}{3}{651--671}.
\newblock
\begin{APACrefDOI} \doi{10.1162/qss_a_00194} \end{APACrefDOI}
\PrintBackRefs{\CurrentBib}

\bibitem [\protect \citeauthoryear {%
Hubert%
\ \BBA {} Arabie%
}{%
Hubert%
\ \BBA {} Arabie%
}{%
{\protect \APACyear {1985}}%
}]{%
hubert1985}
\APACinsertmetastar {%
hubert1985}%
\begin{APACrefauthors}%
Hubert, L.%
\BCBT {}\ \BBA {} Arabie, P.%
\end{APACrefauthors}%
\unskip\
\newblock
\APACrefYearMonthDay{1985}{}{}.
\newblock
{\BBOQ}\APACrefatitle {Comparing partitions} {Comparing partitions}.{\BBCQ}
\newblock
\APACjournalVolNumPages{Journal of Classification}{2}{1}{193--218}.
\newblock
\begin{APACrefDOI} \doi{10.1007/bf01908075} \end{APACrefDOI}
\PrintBackRefs{\CurrentBib}

\bibitem [\protect \citeauthoryear {%
Hubert%
\ \BBA {} Schultz%
}{%
Hubert%
\ \BBA {} Schultz%
}{%
{\protect \APACyear {1976}}%
}]{%
hubert1976}
\APACinsertmetastar {%
hubert1976}%
\begin{APACrefauthors}%
Hubert, L.%
\BCBT {}\ \BBA {} Schultz, J.%
\end{APACrefauthors}%
\unskip\
\newblock
\APACrefYearMonthDay{1976}{}{}.
\newblock
{\BBOQ}\APACrefatitle {QUADRATIC ASSIGNMENT AS A GENERAL DATA ANALYSIS
  STRATEGY} {Quadratic assignment as a general data analysis strategy}.{\BBCQ}
\newblock
\APACjournalVolNumPages{British Journal of Mathematical and Statistical
  Psychology}{29}{2}{190--241}.
\newblock
\begin{APACrefDOI} \doi{10.1111/j.2044-8317.1976.tb00714.x} \end{APACrefDOI}
\PrintBackRefs{\CurrentBib}

\bibitem [\protect \citeauthoryear {%
Janssens%
, Glänzel%
\BCBL {}\ \BBA {} De~Moor%
}{%
Janssens%
, Glänzel%
\BCBL {}\ \BBA {} De~Moor%
}{%
{\protect \APACyear {2008}}%
}]{%
janssens2008}
\APACinsertmetastar {%
janssens2008}%
\begin{APACrefauthors}%
Janssens, F.%
, Glänzel, W.%
\BCBL {}\ \BBA {} De~Moor, B.%
\end{APACrefauthors}%
\unskip\
\newblock
\APACrefYearMonthDay{2008}{}{}.
\newblock
{\BBOQ}\APACrefatitle {A hybrid mapping of information science} {A hybrid
  mapping of information science}.{\BBCQ}
\newblock
\APACjournalVolNumPages{Scientometrics}{75}{3}{607--631}.
\newblock
\begin{APACrefDOI} \doi{10.1007/s11192-007-2002-7} \end{APACrefDOI}
\PrintBackRefs{\CurrentBib}

\bibitem [\protect \citeauthoryear {%
Kessler%
}{%
Kessler%
}{%
{\protect \APACyear {1963}}%
}]{%
kessler1963}
\APACinsertmetastar {%
kessler1963}%
\begin{APACrefauthors}%
Kessler, M\BPBI M.%
\end{APACrefauthors}%
\unskip\
\newblock
\APACrefYearMonthDay{1963}{}{}.
\newblock
{\BBOQ}\APACrefatitle {Bibliographic coupling between scientific papers}
  {Bibliographic coupling between scientific papers}.{\BBCQ}
\newblock
\APACjournalVolNumPages{American Documentation}{14}{1}{10--25}.
\newblock
\begin{APACrefDOI} \doi{10.1002/asi.5090140103} \end{APACrefDOI}
\PrintBackRefs{\CurrentBib}

\bibitem [\protect \citeauthoryear {%
Kivelä%
\ \protect \BOthers {.}}{%
Kivelä%
\ \protect \BOthers {.}}{%
{\protect \APACyear {2014}}%
}]{%
kivela2014}
\APACinsertmetastar {%
kivela2014}%
\begin{APACrefauthors}%
Kivelä, M.%
, Arenas, A.%
, Barthélemy, M.%
, Gleeson, J\BPBI P.%
, Moreno, Y.%
\BCBL {}\ \BBA {} Porter, M\BPBI A.%
\end{APACrefauthors}%
\unskip\
\newblock
\APACrefYearMonthDay{2014}{}{}.
\newblock
{\BBOQ}\APACrefatitle {Multilayer networks} {Multilayer networks}.{\BBCQ}
\newblock
\APACjournalVolNumPages{Journal of Complex Networks}{2}{3}{203--271}.
\newblock
\begin{APACrefDOI} \doi{10.1093/comnet/cnu016} \end{APACrefDOI}
\PrintBackRefs{\CurrentBib}

\bibitem [\protect \citeauthoryear {%
Klavans%
\ \BBA {} Boyack%
}{%
Klavans%
\ \BBA {} Boyack%
}{%
{\protect \APACyear {2017}}%
}]{%
klavans2017}
\APACinsertmetastar {%
klavans2017}%
\begin{APACrefauthors}%
Klavans, R.%
\BCBT {}\ \BBA {} Boyack, K\BPBI W.%
\end{APACrefauthors}%
\unskip\
\newblock
\APACrefYearMonthDay{2017}{}{}.
\newblock
{\BBOQ}\APACrefatitle {Which Type of Citation Analysis Generates the Most
  Accurate Taxonomy of Scientific and Technical Knowledge?} {Which type of
  citation analysis generates the most accurate taxonomy of scientific and
  technical knowledge?}{\BBCQ}
\newblock
\APACjournalVolNumPages{Journal of the Association for Information Science and
  Technology}{68}{4}{984--998}.
\newblock
\begin{APACrefDOI} \doi{10.1002/asi.23734} \end{APACrefDOI}
\PrintBackRefs{\CurrentBib}

\bibitem [\protect \citeauthoryear {%
Krackhardt%
}{%
Krackhardt%
}{%
{\protect \APACyear {1988}}%
}]{%
krackhardt1988}
\APACinsertmetastar {%
krackhardt1988}%
\begin{APACrefauthors}%
Krackhardt, D.%
\end{APACrefauthors}%
\unskip\
\newblock
\APACrefYearMonthDay{1988}{}{}.
\newblock
{\BBOQ}\APACrefatitle {Predicting with networks: Nonparametric multiple
  regression analysis of dyadic data} {Predicting with networks: Nonparametric
  multiple regression analysis of dyadic data}.{\BBCQ}
\newblock
\APACjournalVolNumPages{Social Networks}{10}{4}{359--381}.
\newblock
\begin{APACrefDOI} \doi{10.1016/0378-8733(88)90004-4} \end{APACrefDOI}
\PrintBackRefs{\CurrentBib}

\bibitem [\protect \citeauthoryear {%
Kuhn%
}{%
Kuhn%
}{%
{\protect \APACyear {1970}}%
}]{%
kuhn1970}
\APACinsertmetastar {%
kuhn1970}%
\begin{APACrefauthors}%
Kuhn, T\BPBI S.%
\end{APACrefauthors}%
\unskip\
\newblock
\APACrefYear{1970}.
\newblock
\APACrefbtitle {The Structure of Scientific Revolutions} {The structure of
  scientific revolutions}\ (\PrintOrdinal{2nd}\ \BEd).
\newblock
\APACaddressPublisher{}{University of Chicago Press}.
\PrintBackRefs{\CurrentBib}

\bibitem [\protect \citeauthoryear {%
Lancichinetti%
\ \BBA {} Fortunato%
}{%
Lancichinetti%
\ \BBA {} Fortunato%
}{%
{\protect \APACyear {2012}}%
}]{%
lancichinetti2012}
\APACinsertmetastar {%
lancichinetti2012}%
\begin{APACrefauthors}%
Lancichinetti, A.%
\BCBT {}\ \BBA {} Fortunato, S.%
\end{APACrefauthors}%
\unskip\
\newblock
\APACrefYearMonthDay{2012}{}{}.
\newblock
{\BBOQ}\APACrefatitle {Consensus clustering in complex networks} {Consensus
  clustering in complex networks}.{\BBCQ}
\newblock
\APACjournalVolNumPages{Scientific Reports}{2}{1}{Article 336}.
\newblock
\begin{APACrefDOI} \doi{10.1038/srep00336} \end{APACrefDOI}
\PrintBackRefs{\CurrentBib}

\bibitem [\protect \citeauthoryear {%
Liu%
\ \protect \BOthers {.}}{%
Liu%
\ \protect \BOthers {.}}{%
{\protect \APACyear {2010}}%
}]{%
liu2010}
\APACinsertmetastar {%
liu2010}%
\begin{APACrefauthors}%
Liu, X.%
, Yu, S.%
, Janssens, F.%
, Glänzel, W.%
, Moreau, Y.%
\BCBL {}\ \BBA {} De~Moor, B.%
\end{APACrefauthors}%
\unskip\
\newblock
\APACrefYearMonthDay{2010}{}{}.
\newblock
{\BBOQ}\APACrefatitle {Weighted hybrid clustering by combining text mining and
  bibliometrics on a large‐scale journal database} {Weighted hybrid
  clustering by combining text mining and bibliometrics on a large‐scale
  journal database}.{\BBCQ}
\newblock
\APACjournalVolNumPages{Journal of the American Society for Information Science
  and Technology}{61}{6}{1105--1119}.
\newblock
\begin{APACrefDOI} \doi{10.1002/asi.21312} \end{APACrefDOI}
\PrintBackRefs{\CurrentBib}

\bibitem [\protect \citeauthoryear {%
Morrell%
}{%
Morrell%
}{%
{\protect \APACyear {1972}}%
}]{%
morrell1972}
\APACinsertmetastar {%
morrell1972}%
\begin{APACrefauthors}%
Morrell, J\BPBI B.%
\end{APACrefauthors}%
\unskip\
\newblock
\APACrefYearMonthDay{1972}{}{}.
\newblock
{\BBOQ}\APACrefatitle {The Chemist Breeders: The Research Schools of {Liebig}
  and {Thomas Thomson}} {The chemist breeders: The research schools of {Liebig}
  and {Thomas Thomson}}.{\BBCQ}
\newblock
\APACjournalVolNumPages{Ambix}{19}{1}{1--46}.
\newblock
\begin{APACrefDOI} \doi{10.1179/amb.1972.19.1.1} \end{APACrefDOI}
\PrintBackRefs{\CurrentBib}

\bibitem [\protect \citeauthoryear {%
Mucha%
\ \protect \BOthers {.}}{%
Mucha%
\ \protect \BOthers {.}}{%
{\protect \APACyear {2010}}%
}]{%
mucha2010}
\APACinsertmetastar {%
mucha2010}%
\begin{APACrefauthors}%
Mucha, P\BPBI J.%
, Richardson, T.%
, Macon, K.%
, Porter, M\BPBI A.%
\BCBL {}\ \BBA {} Onnela, J\BHBI P.%
\end{APACrefauthors}%
\unskip\
\newblock
\APACrefYearMonthDay{2010}{}{}.
\newblock
{\BBOQ}\APACrefatitle {Community Structure in Time-Dependent, Multiscale, and
  Multiplex Networks} {Community structure in time-dependent, multiscale, and
  multiplex networks}.{\BBCQ}
\newblock
\APACjournalVolNumPages{Science}{328}{5980}{876--878}.
\newblock
\begin{APACrefDOI} \doi{10.1126/science.1184819} \end{APACrefDOI}
\PrintBackRefs{\CurrentBib}

\bibitem [\protect \citeauthoryear {%
Persson%
}{%
Persson%
}{%
{\protect \APACyear {1994}}%
}]{%
persson1994}
\APACinsertmetastar {%
persson1994}%
\begin{APACrefauthors}%
Persson, O.%
\end{APACrefauthors}%
\unskip\
\newblock
\APACrefYearMonthDay{1994}{}{}.
\newblock
{\BBOQ}\APACrefatitle {The intellectual base and research fronts of {JASIS}
  1986--1990} {The intellectual base and research fronts of {JASIS}
  1986--1990}.{\BBCQ}
\newblock
\APACjournalVolNumPages{Journal of the American Society for Information
  Science}{45}{1}{31--38}.
\newblock
\begin{APACrefDOI}
  \doi{10.1002/(sici)1097-4571(199401)45:1<31::aid-asi4>3.0.co;2-g}
  \end{APACrefDOI}
\PrintBackRefs{\CurrentBib}

\bibitem [\protect \citeauthoryear {%
Priem%
, Piwowar%
\BCBL {}\ \BBA {} Orr%
}{%
Priem%
, Piwowar%
\BCBL {}\ \BBA {} Orr%
}{%
{\protect \APACyear {2022}}%
}]{%
priem2022}
\APACinsertmetastar {%
priem2022}%
\begin{APACrefauthors}%
Priem, J.%
, Piwowar, H.%
\BCBL {}\ \BBA {} Orr, R.%
\end{APACrefauthors}%
\unskip\
\newblock
\APACrefYearMonthDay{2022}{}{}.
\newblock
\APACrefbtitle {{OpenAlex}: A fully-open index of scholarly works, authors,
  venues, institutions, and concepts.} {{OpenAlex}: A fully-open index of
  scholarly works, authors, venues, institutions, and concepts.}
\newblock
\APACaddressPublisher{}{arXiv}.
\newblock
\begin{APACrefDOI} \doi{10.48550/arXiv.2205.01833} \end{APACrefDOI}
\PrintBackRefs{\CurrentBib}

\bibitem [\protect \citeauthoryear {%
Salton%
\ \BBA {} McGill%
}{%
Salton%
\ \BBA {} McGill%
}{%
{\protect \APACyear {1983}}%
}]{%
salton1983}
\APACinsertmetastar {%
salton1983}%
\begin{APACrefauthors}%
Salton, G.%
\BCBT {}\ \BBA {} McGill, M\BPBI J.%
\end{APACrefauthors}%
\unskip\
\newblock
\APACrefYear{1983}.
\newblock
\APACrefbtitle {Introduction to Modern Information Retrieval} {Introduction to
  modern information retrieval}.
\newblock
\APACaddressPublisher{}{McGraw-Hill}.
\PrintBackRefs{\CurrentBib}

\bibitem [\protect \citeauthoryear {%
Sjögårde%
\ \BBA {} Ahlgren%
}{%
Sjögårde%
\ \BBA {} Ahlgren%
}{%
{\protect \APACyear {2020}}%
}]{%
sjogarde2020}
\APACinsertmetastar {%
sjogarde2020}%
\begin{APACrefauthors}%
Sjögårde, P.%
\BCBT {}\ \BBA {} Ahlgren, P.%
\end{APACrefauthors}%
\unskip\
\newblock
\APACrefYearMonthDay{2020}{}{}.
\newblock
{\BBOQ}\APACrefatitle {Granularity of algorithmically constructed
  publication-level classifications of research publications: Identification of
  specialties} {Granularity of algorithmically constructed publication-level
  classifications of research publications: Identification of
  specialties}.{\BBCQ}
\newblock
\APACjournalVolNumPages{Quantitative Science Studies}{1}{1}{207--238}.
\newblock
\begin{APACrefDOI} \doi{10.1162/qss_a_00004} \end{APACrefDOI}
\PrintBackRefs{\CurrentBib}

\bibitem [\protect \citeauthoryear {%
Small%
}{%
Small%
}{%
{\protect \APACyear {1973}}%
}]{%
small1973}
\APACinsertmetastar {%
small1973}%
\begin{APACrefauthors}%
Small, H.%
\end{APACrefauthors}%
\unskip\
\newblock
\APACrefYearMonthDay{1973}{}{}.
\newblock
{\BBOQ}\APACrefatitle {Co‐citation in the scientific literature: A new
  measure of the relationship between two documents} {Co‐citation in the
  scientific literature: A new measure of the relationship between two
  documents}.{\BBCQ}
\newblock
\APACjournalVolNumPages{Journal of the American Society for Information
  Science}{24}{4}{265--269}.
\newblock
\begin{APACrefDOI} \doi{10.1002/asi.4630240406} \end{APACrefDOI}
\PrintBackRefs{\CurrentBib}

\bibitem [\protect \citeauthoryear {%
Strona%
\ \protect \BOthers {.}}{%
Strona%
\ \protect \BOthers {.}}{%
{\protect \APACyear {2014}}%
}]{%
strona2014}
\APACinsertmetastar {%
strona2014}%
\begin{APACrefauthors}%
Strona, G.%
, Nappo, D.%
, Boccacci, F.%
, Fattorini, S.%
\BCBL {}\ \BBA {} San-Miguel-Ayanz, J.%
\end{APACrefauthors}%
\unskip\
\newblock
\APACrefYearMonthDay{2014}{}{}.
\newblock
{\BBOQ}\APACrefatitle {A fast and unbiased procedure to randomize ecological
  binary matrices with fixed row and column totals} {A fast and unbiased
  procedure to randomize ecological binary matrices with fixed row and column
  totals}.{\BBCQ}
\newblock
\APACjournalVolNumPages{Nature Communications}{5}{1}{Article 4114}.
\newblock
\begin{APACrefDOI} \doi{10.1038/ncomms5114} \end{APACrefDOI}
\PrintBackRefs{\CurrentBib}

\bibitem [\protect \citeauthoryear {%
Thijs%
\ \BBA {} Glänzel%
}{%
Thijs%
\ \BBA {} Glänzel%
}{%
{\protect \APACyear {2018}}%
}]{%
thijs2018}
\APACinsertmetastar {%
thijs2018}%
\begin{APACrefauthors}%
Thijs, B.%
\BCBT {}\ \BBA {} Glänzel, W.%
\end{APACrefauthors}%
\unskip\
\newblock
\APACrefYearMonthDay{2018}{}{}.
\newblock
{\BBOQ}\APACrefatitle {The contribution of the lexical component in hybrid
  clustering, the case of four decades of “{Scientometrics}”} {The
  contribution of the lexical component in hybrid clustering, the case of four
  decades of “{Scientometrics}”}.{\BBCQ}
\newblock
\APACjournalVolNumPages{Scientometrics}{115}{1}{21--33}.
\newblock
\begin{APACrefDOI} \doi{10.1007/s11192-018-2659-0} \end{APACrefDOI}
\PrintBackRefs{\CurrentBib}

\bibitem [\protect \citeauthoryear {%
Thijs%
, Schiebel%
\BCBL {}\ \BBA {} Glänzel%
}{%
Thijs%
, Schiebel%
\BCBL {}\ \BBA {} Glänzel%
}{%
{\protect \APACyear {2013}}%
}]{%
thijs2013}
\APACinsertmetastar {%
thijs2013}%
\begin{APACrefauthors}%
Thijs, B.%
, Schiebel, E.%
\BCBL {}\ \BBA {} Glänzel, W.%
\end{APACrefauthors}%
\unskip\
\newblock
\APACrefYearMonthDay{2013}{}{}.
\newblock
{\BBOQ}\APACrefatitle {Do second-order similarities provide added-value in a
  hybrid approach?} {Do second-order similarities provide added-value in a
  hybrid approach?}{\BBCQ}
\newblock
\APACjournalVolNumPages{Scientometrics}{96}{3}{667--677}.
\newblock
\begin{APACrefDOI} \doi{10.1007/s11192-012-0896-1} \end{APACrefDOI}
\PrintBackRefs{\CurrentBib}

\bibitem [\protect \citeauthoryear {%
Traag%
, Waltman%
\BCBL {}\ \BBA {} van Eck%
}{%
Traag%
, Waltman%
\BCBL {}\ \BBA {} van Eck%
}{%
{\protect \APACyear {2019}}%
}]{%
traag2019}
\APACinsertmetastar {%
traag2019}%
\begin{APACrefauthors}%
Traag, V\BPBI A.%
, Waltman, L.%
\BCBL {}\ \BBA {} van Eck, N\BPBI J.%
\end{APACrefauthors}%
\unskip\
\newblock
\APACrefYearMonthDay{2019}{}{}.
\newblock
{\BBOQ}\APACrefatitle {From {Louvain} to {Leiden}: Guaranteeing well-connected
  communities} {From {Louvain} to {Leiden}: Guaranteeing well-connected
  communities}.{\BBCQ}
\newblock
\APACjournalVolNumPages{Scientific Reports}{9}{1}{Article 5233}.
\newblock
\begin{APACrefDOI} \doi{10.1038/s41598-019-41695-z} \end{APACrefDOI}
\PrintBackRefs{\CurrentBib}

\bibitem [\protect \citeauthoryear {%
van Eck%
\ \BBA {} Waltman%
}{%
van Eck%
\ \BBA {} Waltman%
}{%
{\protect \APACyear {2009}}%
}]{%
vaneck2009}
\APACinsertmetastar {%
vaneck2009}%
\begin{APACrefauthors}%
van Eck, N\BPBI J.%
\BCBT {}\ \BBA {} Waltman, L.%
\end{APACrefauthors}%
\unskip\
\newblock
\APACrefYearMonthDay{2009}{}{}.
\newblock
{\BBOQ}\APACrefatitle {How to normalize cooccurrence data? {An} analysis of
  some well‐known similarity measures} {How to normalize cooccurrence data?
  {An} analysis of some well‐known similarity measures}.{\BBCQ}
\newblock
\APACjournalVolNumPages{Journal of the American Society for Information Science
  and Technology}{60}{8}{1635--1651}.
\newblock
\begin{APACrefDOI} \doi{10.1002/asi.21075} \end{APACrefDOI}
\PrintBackRefs{\CurrentBib}

\bibitem [\protect \citeauthoryear {%
Waltman%
\ \protect \BOthers {.}}{%
Waltman%
\ \protect \BOthers {.}}{%
{\protect \APACyear {2020}}%
}]{%
waltman2020}
\APACinsertmetastar {%
waltman2020}%
\begin{APACrefauthors}%
Waltman, L.%
, Boyack, K\BPBI W.%
, Colavizza, G.%
\BCBL {}\ \BBA {} van Eck, N\BPBI J.%
\end{APACrefauthors}%
\unskip\
\newblock
\APACrefYearMonthDay{2020}{}{}.
\newblock
{\BBOQ}\APACrefatitle {A principled methodology for comparing relatedness
  measures for clustering publications} {A principled methodology for comparing
  relatedness measures for clustering publications}.{\BBCQ}
\newblock
\APACjournalVolNumPages{Quantitative Science Studies}{1}{2}{691--713}.
\newblock
\begin{APACrefDOI} \doi{10.1162/qss_a_00035} \end{APACrefDOI}
\PrintBackRefs{\CurrentBib}

\bibitem [\protect \citeauthoryear {%
Xie%
\ \BBA {} Waltman%
}{%
Xie%
\ \BBA {} Waltman%
}{%
{\protect \APACyear {2025}}%
}]{%
xie2025}
\APACinsertmetastar {%
xie2025}%
\begin{APACrefauthors}%
Xie, Q.%
\BCBT {}\ \BBA {} Waltman, L.%
\end{APACrefauthors}%
\unskip\
\newblock
\APACrefYearMonthDay{2025}{}{}.
\newblock
{\BBOQ}\APACrefatitle {A comparison of citation-based clustering and topic
  modeling for science mapping} {A comparison of citation-based clustering and
  topic modeling for science mapping}.{\BBCQ}
\newblock
\APACjournalVolNumPages{Scientometrics}{130}{5}{2497--2522}.
\newblock
\begin{APACrefDOI} \doi{10.1007/s11192-025-05324-z} \end{APACrefDOI}
\PrintBackRefs{\CurrentBib}

\bibitem [\protect \citeauthoryear {%
Yan%
\ \BBA {} Ding%
}{%
Yan%
\ \BBA {} Ding%
}{%
{\protect \APACyear {2012}}%
}]{%
yan2012}
\APACinsertmetastar {%
yan2012}%
\begin{APACrefauthors}%
Yan, E.%
\BCBT {}\ \BBA {} Ding, Y.%
\end{APACrefauthors}%
\unskip\
\newblock
\APACrefYearMonthDay{2012}{}{}.
\newblock
{\BBOQ}\APACrefatitle {Scholarly network similarities: How bibliographic
  coupling networks, citation networks, cocitation networks, topical networks,
  coauthorship networks, and coword networks relate to each other} {Scholarly
  network similarities: How bibliographic coupling networks, citation networks,
  cocitation networks, topical networks, coauthorship networks, and coword
  networks relate to each other}.{\BBCQ}
\newblock
\APACjournalVolNumPages{Journal of the American Society for Information Science
  and Technology}{63}{7}{1313--1326}.
\newblock
\begin{APACrefDOI} \doi{10.1002/asi.22680} \end{APACrefDOI}
\PrintBackRefs{\CurrentBib}

\bibitem [\protect \citeauthoryear {%
Yang%
\ \protect \BOthers {.}}{%
Yang%
\ \protect \BOthers {.}}{%
{\protect \APACyear {2016}}%
}]{%
yang2016}
\APACinsertmetastar {%
yang2016}%
\begin{APACrefauthors}%
Yang, S.%
, Han, R.%
, Wolfram, D.%
\BCBL {}\ \BBA {} Zhao, Y.%
\end{APACrefauthors}%
\unskip\
\newblock
\APACrefYearMonthDay{2016}{}{}.
\newblock
{\BBOQ}\APACrefatitle {Visualizing the intellectual structure of information
  science (2006–2015): Introducing author keyword coupling analysis}
  {Visualizing the intellectual structure of information science (2006–2015):
  Introducing author keyword coupling analysis}.{\BBCQ}
\newblock
\APACjournalVolNumPages{Journal of Informetrics}{10}{1}{132--150}.
\newblock
\begin{APACrefDOI} \doi{10.1016/j.joi.2015.12.003} \end{APACrefDOI}
\PrintBackRefs{\CurrentBib}

\bibitem [\protect \citeauthoryear {%
Zhou%
, Liang%
\BCBL {}\ \BBA {} Acuna%
}{%
Zhou%
, Liang%
\BCBL {}\ \BBA {} Acuna%
}{%
{\protect \APACyear {2026}}%
}]{%
zhou2026}
\APACinsertmetastar {%
zhou2026}%
\begin{APACrefauthors}%
Zhou, H.%
, Liang, L.%
\BCBL {}\ \BBA {} Acuna, D\BPBI E.%
\end{APACrefauthors}%
\unskip\
\newblock
\APACrefYearMonthDay{2026}{}{}.
\newblock
{\BBOQ}\APACrefatitle {Widespread reference missingness disparities in open
  scholarly metadata} {Widespread reference missingness disparities in open
  scholarly metadata}.{\BBCQ}
\newblock
\APACjournalVolNumPages{Quantitative Science Studies}{7}{}{53--65}.
\newblock
\begin{APACrefDOI} \doi{10.1162/qss.a.400} \end{APACrefDOI}
\PrintBackRefs{\CurrentBib}

\bibitem [\protect \citeauthoryear {%
Zupic%
\ \BBA {} Čater%
}{%
Zupic%
\ \BBA {} Čater%
}{%
{\protect \APACyear {2015}}%
}]{%
zupic2015}
\APACinsertmetastar {%
zupic2015}%
\begin{APACrefauthors}%
Zupic, I.%
\BCBT {}\ \BBA {} Čater, T.%
\end{APACrefauthors}%
\unskip\
\newblock
\APACrefYearMonthDay{2015}{}{}.
\newblock
{\BBOQ}\APACrefatitle {Bibliometric Methods in Management and Organization}
  {Bibliometric methods in management and organization}.{\BBCQ}
\newblock
\APACjournalVolNumPages{Organizational Research Methods}{18}{3}{429--472}.
\newblock
\begin{APACrefDOI} \doi{10.1177/1094428114562629} \end{APACrefDOI}
\PrintBackRefs{\CurrentBib}

\end{thebibliography}

\clearpage
\includepdf[pages=-]{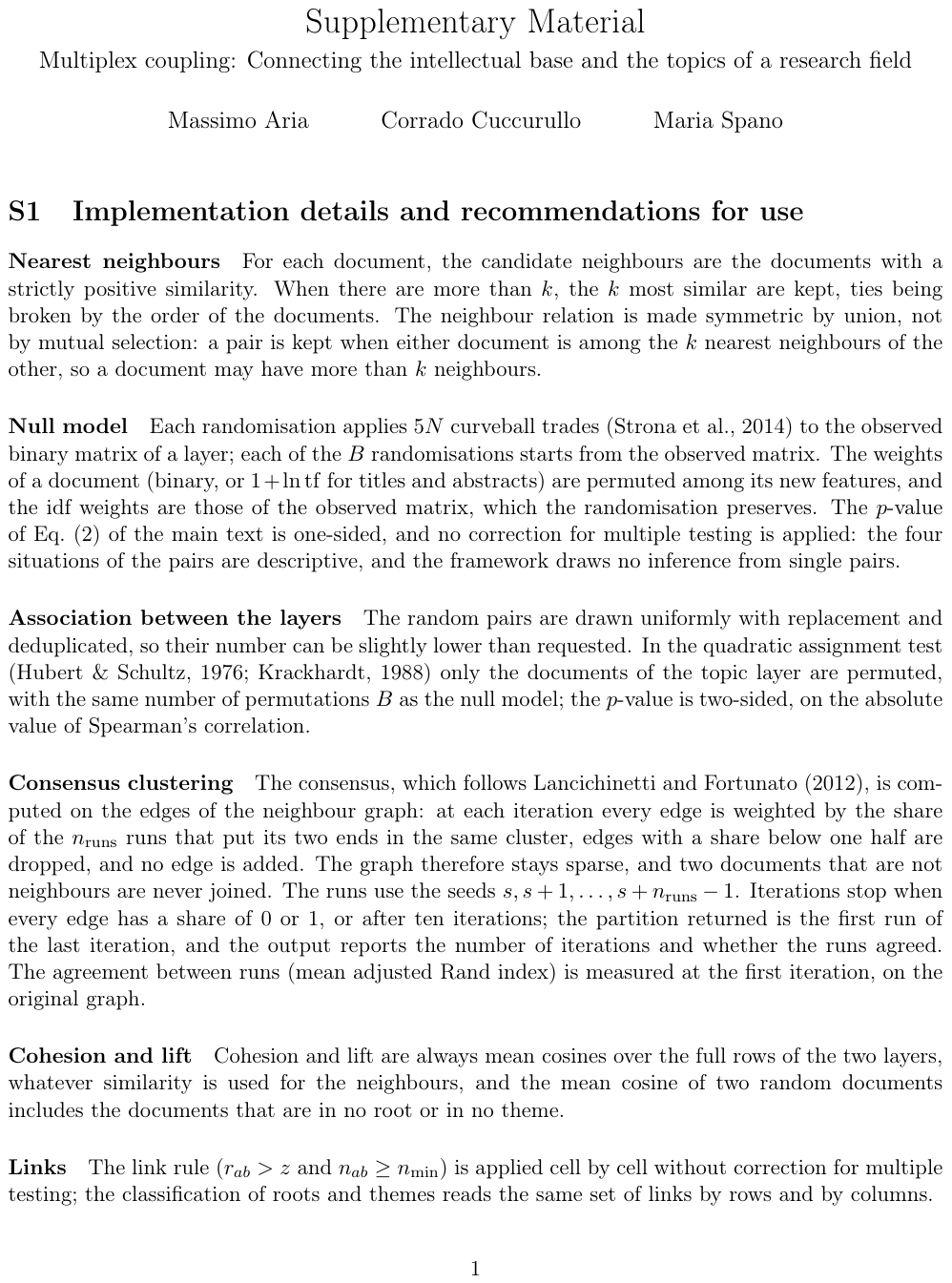}
\end{document}